\newcolumntype{R}[2]{%
    >{\adjustbox{angle=#1,lap=\width-(#2)}\bgroup}%
    l%
    <{\egroup}%
}
\newcommand*\rot{\multicolumn{1}{R{45}{1em}}}% no optional argument here, please!
\Crefname{equation}{}{}
\Crefname{app}{Appendix}{Appendices}
\Crefname{app}{Appendix}{Appendices}
\newcommand{\Ex}{\mathbb{E}}
\newtheorem{theorem}{Theorem}
\newtheorem{corollary}[theorem]{Corollary}
\newtheorem{lemma}[theorem]{Lemma}
\newtheorem{proposition}[theorem]{Proposition}
\newtheorem{remark}[theorem]{Remark}
\numberwithin{equation}{section}
\numberwithin{theorem}{section}
\DeclareMathOperator{\prob}{Prob}
\title[Escape times for subgraph detection and graph partitioning]{An escape time formulation for subgraph detection and partitioning of directed graphs}\thanks{The authors are grateful to Matthias Kurzke and Jonathan Weare for helpful discussions throughout the crafting of this result. Z.M.B.\ and P.J.M.\ were supported by the ARO under MURI award W911NF-18-1-0244. J.L.M.\ acknowledges support from the NSF through grant DMS-1909035. P.J.M.\ was also supported by the NSF through grant BCS-2140024. B.\ Osting acknowledges support from NSF DMS-1752202. Z.M.B.\ was supported by DMS-2137511.}
\author[Z.~M.~Boyd]{Zachary~M.~Boyd} \address{Department of Mathematics, Brigham Young University, Provo, UT 84602, USA} \email{zachboyd@byu.edu}
\author[N.~Fraiman]{Nicolas Fraiman} \address{Department of Statistics and Operations Research, University of North Carolina at Chapel Hill, Chapel Hill, NC 27599, USA} \email{fraiman@email.unc.edu}
\author[J.~L.~Marzuola]{Jeremy~L.~Marzuola} \address{Department of Mathematics, University of North Carolina at Chapel Hill, Chapel Hill, NC 27599, USA} \email{marzuola@math.unc.edu}
\author[P.~.J.~Mucha]{Peter~J.~Mucha} \address{Department of Mathematics, Dartmouth College, Hanover, NH 03755, USA} \email{peter.j.mucha@dartmouth.edu}
\author[B.~Osting]{Braxton Osting} 
\address{Department of Mathematics, University of Utah, Salt Lake City, UT 84112, USA} 
\email{osting@math.utah.edu}
\begin{document}

\begin{abstract} 
We provide a rearrangement based algorithm for fast detection of subgraphs of $k$ vertices with long escape times for directed or undirected networks. Complementing other notions of densest subgraphs and graph cuts, our method is based on the mean hitting time required for a random walker to leave a designated set and hit the complement.  We provide a new 
relaxation of this notion of hitting time on a given subgraph and use that relaxation to construct a fast subgraph detection algorithm and a generalization to $K$-partitioning schemes.  Using a modification of the subgraph detector on each component, we propose a graph partitioner that identifies regions where random walks live for comparably large times. Importantly, our method implicitly respects the directed nature of the data for directed graphs while also being applicable to undirected graphs. We apply the partitioning method for community detection to a large class of model and real-world data sets.  
\end{abstract}

\maketitle

%\tableofcontents

\section{Introduction}

%\PJM{Be sure there's enough here emphasizing directedness}

Subgraph detection and graph partitioning are fundamental problems in network analysis, each typically framed in terms of identifying a group or groups of vertices of the graph so that the vertices in a shared group are well connected or ``similar'' to each other in their connection patterns while the vertices in different groups (or the complement group) are ``dissimilar''. The specific notion of connectedness or similarity is a modeling choice, but one often assumes that edges connect similar vertices, so that in general the detected subgraph is dense and the ``communities'' identified in graph partitioning are very often more connected within groups than between groups (``assortative communities'').

The identification of subgraphs with particular properties is a long-standing pursuit of network analysis with various applications. Dense subgraphs as assortative communities might represent coordinating regions of interest in the brain \cite{Meunier_2009,Bassett_2011} or social cliques in a social network \cite{Moody_2001}. In biology, subgraph detection plays a role in discovering DNA motifs and in gene annotation \cite{fratkin2006motifcut}. In cybersecurity, dense subgraphs might represent anomalous patterns to be highlighted and investigated (e.g., \cite{Yan_2021}).  See \cite{ma2020efficient} for a recent survey and a discussion of alternative computational methods.  As noted there, some of the existing algorithms apply to directed graphs, but most do not.

In the corresponding computer science literature, much of the focus has been on approximation algorithms since the dense $k$-subgraph is NP-hard to solve exactly (a fact easily seen by a reduction from the $k$-clique problem). An algorithm that on any input $(G, k)$ returns a subgraph of order $k$ (that is, $k$ vertices or ``nodes''; note, we will sometimes refer to the ``size'' of a graph or subgraph to be the number of vertices, not the number of edges) with average degree within a factor of at most $n^{1/3-\delta}$ from the optimum solution, where $n$ is the order of graph $G$ and $\delta\approx 1/60$ was proposed in \cite{feige2001dense}. This approximation ratio was the best known for almost a decade until a log-density based approach yielded $n^{1/4+\varepsilon}$
for any $\varepsilon > 0$ \cite{bhaskara2010}. This remains the state-of-the-art approximation algorithm. On the negative side it has been shown \cite{manurangsi2017}, assuming the exponential time hypothesis, that there is no polynomial-time algorithm that approximates to within an $n^{1/(\log\log n)^c}$ factor of the optimum. Variations of the problem where the target subgraph has size at most $k$ or at least $k$ have also been considered \cite{andersen2009}.

Depending on the application of interest, one might seek one or more dense subgraphs within the larger network, a collection of subgraphs to partition the network (i.e., assign a community label to each node), or a set of potentially overlapping subgraphs (see, e.g., \cite{Wilson_2014}). While the literature on ``community detection'' is enormous (see, e.g.,  \cite{fortunato_community_2010,fortunato_community_2016,fortunato_community_2022,porter_communities_2009,shai_case_2017} as reviews), a number of common thematic choices have emerged. Many variants of the graph partitioning problem can be formalized as a (possibly constrained) optimization problem. One popular choice minimizes the total weight of the cut edges while making the components roughly equal in size \cite{shi2000normalized}. Another common choice maximizes the total within-community weight relative to that expected at random in some model \cite{newman_finding_2004}. Other proposed objective functions include ratio cut weight \cite{bresson2013adaptive}, and approximate ``surprise'' (improbability) under a cumulative hypergeometric distribution \cite{Traag_Aldecoa_Delvenne_2015}. 
However, most of these objectives are NP-hard to optimize, leading to the development of a variety of heuristic methods for approximate partitioning (see the reviews cited above for many different approaches). Some of the methods that have been studied are based on the Fielder eigenvector \cite{fiedler1973algebraic}, multicommunity flows \cite{leighton1989approximate}, semidefinite programming \cite{arora2004expander,arora2008geometry,arora2009expander}, expander flows \cite{arora2010logn}, single commodity flows \cite{khandekar2009graph}, or Dirichlet partitions  \cite{osting2014minimal,osting2017consistency,wang2019diffusion}.
%Regardless of these detailed choices, cutting a graph into smaller pieces is a fundamental algorithmic operation. 

Whichever choice is made for the objective and heuristic, the identified communities can be used to describe the mesoscale structure of the graph and can be important in a variety of applications (see, e.g., the case studies considered in \cite{shai_case_2017}). Subgraphs and communities can also be important inputs to solving problems like graph traversal, finding paths, trees, and flows; while partitioning large networks is often an important sub-problem for complexity reduction or parallel processing in problems such as graph eigenvalue computations \cite{BDR13}, breadth-first search \cite{BM13}, triangle listing \cite{CC11}, PageRank \cite{SW13} and Personalized PageRank~\cite{andersen2007algorithms}.

In the present work, we consider a different formulation of the subgraph detection problem, wherein we aim to identify a subgraph with a long mean exit time---that is, the expected time for a random walker to escape the subgraph and hit its complement. Importantly, this formulation inherently respects the possibly directed nature of the edges. This formulation is distinct from either maximizing the total or average edge weight in a dense subgraph and minimizing the edge cut (as a count or suitably normalized) that is necessary to separate a subgraph from its complement. Furthermore, explicitly optimizing for the mean exit time to identify subgraphs may in some applications be preferred as a more natural quantity of interest. For example, in studying the spread of information or a disease on a network, working in terms of exit times is more immediately dynamically relevant than structural measurements of subgraph densities or cuts. Similarly, the development of respondent-driven sampling in the social survey context (see, e.g., \cite{Mouw_2012,Verdery_2015}) is primarily motivated by there being subpopulations that are difficult to reach (so we expect they often also have high exit times on the directed network with edges reversed). We thus argue that the identification of subgraphs with large exit times is at least as interesting---and typically related to---those subgraphs with large density and or small cut. Indeed, random walker diffusion on a network and assortative communities are directly related in that the modularity quality function used in many community detection algorithms can be recovered as a low-order truncation of a ``Markov stability'' auto-correlation measurement of random walks staying in communities \cite{Lambiotte_Delvenne_Barahona_2014}. However, the directed nature of the edges is fully respected in our escape time formulation of subgraph detection presented here (cf.\ random walkers moving either forward or backward along edges in the Markov stability calculation \cite{Mucha_Richardson_Macon_Porter_Onnela_2010} that rederives modularity for a directed network \cite{Leicht_Newman_2008}). 

From an optimization point of view, the method presented here can be viewed as a rearrangemnet method or a Merriman-Bence-Osher (MBO) scheme \cite{MBO1993} as applied to Poisson solves on a graph.  Convergence of MBO schemes is an active area of research in a variety of other scenarios: see \cite{chambolle2006convergence,ishii2005optimal} in the case of continuum mean curvature flows, \cite{budd2020graph,van_Gennip_2014} in a graph Allen-Cahn type problem, and \cite{jacobs2018auction} for a volume constrained MBO scheme on undirected networks. 
Similarly, proving convergence rates for our algorithm by determining quantitative bounds on the number of interior iterations required for a given $\epsilon$ is an important question for the numerical method and its applications to large data sets. Importantly, the method for subgraph detection that we develop and explore, and then extend to a partitioner, is inherently capable of working on directed graphs without any modification. 
Also, searching for related graph problems where this type of rearrangement algorithm for optimization can be applied will be an important endeavor.

\subsection{A New Formulation in Graphs}

Let $G = (V,E)$ be a (strongly) connected graph (undirected or directed; we use the term ``graph'' throughout to include graphs that are possibly directed), with adjacency matrix $A$ with element $A_{ij}$ indicating presence/absence (and possible weight) of an edge from $i$ to $j$. We define the (out-)degree matrix $D$ to be diagonal with values $D_{ii}=\sum_j A_{ij}$. For weighted edges in $A$ this weighted degree is typically referred to as ``strength'' but we will continue to use the word ``degree'' throughout to be this weighted quantity. Consider the discrete time Markov chain $M_n$ for the random walk described by the (row stochastic) \emph{probability transition matrix}, $P := D^{-1} A$. The \emph{exit time from $S\subset V$} is the stopping time $T_S = \inf\{n\geq 0: M_n\in S^c\}$.
The \emph{mean exit time from $S$ of a node $i$} is defined by $\Ex_i T_S$ (where $\Ex_i$ is the expectation if the walker starts at node $i$) and is given by $v_i$, where $v$ is the solution to the system of equations 
\begin{subequations}
\label{ht_def}
\begin{align}
\label{ht_defa}
(I-P)_{SS} v_S &= 1_S \\
v_{S^c}  &= 0\,, 
\end{align}
\end{subequations}
where the subscript $S$ represents restriction of a vector or matrix to the indices in $S$. %Here $1_S(i) = \begin{cases} 1, & i \in S \\ 0, & i \notin S \end{cases}$ is the indicator function for $S \subset V$.
The \emph{average mean escape time (MET) from $S$} is then
%a degree weighted average mean 
%$$
%\tau_1 (S) = \frac{1}{|S|_D} \sum_{i \in S} (Dv)_{ij}
%$$
%or the non-degree weighted mean
% \begin{equation} \label{e:MET}
% \tau (S) = \frac{1}{|S|} \sum_{i \in S} v_i.
% \end{equation} 
\begin{equation} \label{e:MET}
\tau (S) = \frac{1}{|V|} \sum_{i \in V} v_{i},
\end{equation} 
representing the mean exit time from $S$ of a node chosen uniformly at random in the graph (noting that $v_i=0$ for $i\in S^c$).
We are interested in finding vertex sets (of fixed size) having large MET, as these correspond to sets that a random walker would remain in for a long time. Thus, for fixed $k\in \mathbb N$, we consider the \emph{subgraph detection problem}, 
\begin{equation}
 \label{e:subgraphDetection}
 \max_{\substack{S\subset V \\ |S| =k}} \tau(S).
 \end{equation}
%\BO{Is this the correct interpretation? What if we don't divide by $|S|$?} \BO{Change $\tau$ to be the total escape time so that (1.4) matches (1.6).}
%Note, the mean hitting time can be reformulated in terms of a graph Laplacian, i.e. the problem can be re-written as 
%\[
%D^{-1/2} ( I - D^{1/2} P D^{-1/2}) (D^{1/2} u ) = 1.
%\]
%Hence, setting $w =D^{1/2} u$, we observe
%\[
%L w = D^{1/2}  1,
%\]
%with $L =  I - D^{1/2} P D^{-1/2}$ the symmetric (in the case of an undirected graph), normalized graph Laplacian.
%
Multiplying~\eqref{ht_defa} on the left by $D$, we obtain the equivalent system, 
\begin{subequations}
\label{e:Poisson}
\begin{align}
& L v = d %- \frac{1}{n} \langle d,1\rangle 1  
\textrm{ on } S , \\
& v = 0 \textrm{ on } S^c\,,
\end{align}
\end{subequations}
where $L = D-A$ is the (unnormalized, out-degree) graph Laplacian, and $d = D 1$ is the out-degree vector.  We denote the solution to \eqref{e:Poisson} by $v = v(S)$. 
For $\varepsilon> 0$, we will also consider the approximation to \eqref{e:Poisson}, 
\begin{equation}
\label{e:relax}
\left[ L + \varepsilon^{-1}  (1-\phi) \right] u =  d 
\end{equation}
where $\phi$ is a vector and action by $(1-\phi)$ on the left is interpreted as multiplication by the diagonal matrix $I - {\rm diag} (\phi)$.  We denote the solution $u = u_\varepsilon$. 
Formally, for $\phi = \chi_S$, the characteristic function of $S$,  as $\varepsilon \to 0$, the vector $u_\varepsilon \to v_S$ where $v_S$ satisfies \eqref{e:Poisson}.
We can also define an associated approximate MET
\begin{equation}
\label{e:l1energy}
E_\varepsilon (\phi) := \frac{1}{|V|} \| u_\varepsilon \|_{
\ell^1(V)} =   \frac{1}{|V|}  \left\|  \left[  L + \varepsilon^{-1}  (1-\phi) \right]^{-1} d \right\|_{\ell^1 (V)},
\end{equation}
where as $\varepsilon \to 0$, we have that $E_\varepsilon(\chi_S) \to \frac{1}{|V|}  \| v_S \|_{\ell^1(V)} = \tau(S)$.  We then arrive at the following \emph{relaxed subgraph detection problem}
\begin{equation}
\label{exp:l1_opt}
\max_{\substack{0 \leq \phi \leq 1 \\ \langle \phi, 1 \rangle = k}} E_\epsilon (\phi),
\end{equation}
which we solve and study in this paper.
For small $\varepsilon>0$, we will study the relationship between the subgraph detection problem \eqref{e:subgraphDetection} and its relaxation
\eqref{exp:l1_opt}.

%The \emph{mean escape time} from S is given by 
%\begin{equation}
%\tau (S) = \frac{1}{|S|} \sum_{i \in S} (v_S)_i = \frac{1}{|S|} \| v_S  \|_{\ell^1(V)}
%\end{equation}
%where $v = v_S$ satisfies \eqref{e:Poisson}.
We are also interested in finding node partitions with high MET in the following sense: Given a vertex subset $S \subset V$, a random walker that starts in $S$ should have difficulty escaping to $S^c$ and a random walker that starts in $S^c$ should have difficulty escaping to $S$. This leads to the problem $\max_{V = S \amalg S^c} \,\tau(S) + \tau(S^c)$. 
More generally, for a vertex partition, $V = \amalg_{\ell \in [K]} S_\ell$, we can consider 
\begin{equation} \label{e:minEscape}
\max_{V = \amalg_{\ell \in [K]} S_\ell} \ \sum_{\ell \in [K]} \ \tau(S_\ell). 
\end{equation}
%\BO{I'm multiplying by $|S_\ell|$ so that \eqref{e:minEscape} looks like \eqref{e:Opt} in the limit. We might want to define the total escape time rather than the mean escape time to avoid this.}
The solution embodies the idea that in a good partition a random walker will transition between partition components very infrequently. 
An approximation to \eqref{e:minEscape} is
\begin{equation}
\label{e:Opt}
\max_{V = \amalg_{\ell \in [K]} S_\ell} \ \sum_{\ell \in [K]} \ E_\varepsilon (\chi_{S_\ell}). 
\end{equation}
We can make an additional approximation by relaxing the constraint set. 
Define the admissible class 
$$
\mathcal A_K = \left\{ \{\phi_\ell\}_{\ell\in [K]} \colon 
\phi_\ell \in \mathbb R^{|V|}_{+}
\text{ and } 
\sum_{\ell \in [K] } \phi_\ell  = 1 \right\}.
$$
Observe that the collection of indicator functions for any $K$-partition of the vertices  is a member of $\mathcal A_K$. Furthermore, we can see that $\mathcal A_K \cong (\Delta_K)^{|V|}$, where $\Delta_K$ is the unit simplex in $K$ dimensions. Thus, the extremal points of $\mathcal A_K$ are precisely the collection of indicator functions for a $K$-partition of the vertices. 
%For $\varepsilon >0$, a relaxed version of the graph partitioning problem \eqref{e:minEscape} can be formulated
%\begin{equation}
%\label{e:minEscapeRelax}
%\max_{ \{\phi_\ell\}_{\ell \in [k]} \in \mathcal A_k } \   E_{\varepsilon}\left( \{\phi_\ell\}_{\ell \in [k]} \right).
%\end{equation}
For $\varepsilon >0$, a modified relaxed version of the graph partitioning problem \eqref{e:minEscape} can be formulated as
\begin{equation}
   \label{e:minEscapeRelax_alt}
\min_{ \{\phi_\ell\}_{\ell \in [K]} \in \mathcal A_K } \tilde{E}_{\epsilon} \left( \{\phi_\ell\}_{\ell \in [K]} \right), \quad \textrm{where} \quad  \tilde{E}_{\epsilon} \left( \{\phi_\ell\}_{\ell \in [K]} \right) = \sum_{i = 1}^K [1+ \epsilon |V| E_\epsilon(\phi_i)]^{-1}.
\end{equation}
For small $\varepsilon>0$, we will study the relationship between the graph partitioning problem  \eqref{e:minEscape} and its relaxation
\eqref{e:minEscapeRelax_alt}.  An important feature of \eqref{e:minEscapeRelax_alt} is that it can be optimized using fast rearrangement methods that effectively introduces a volume normalization for the partition sets, while optimization of \eqref{e:minEscape} resulted in favoring one partition being full volume.  We will discuss this further in Section \ref{sec:gp} below.

\subsection{Outline of the Paper}
In \Cref{s:Analysis}, we lay the analytic foundation for rearrangement methods for both the subgraph detection and partitioning problems.  
We prove the convergence of the methods to local optimizers of our energy functionals in both cases and establish the fact that our fast numerical methods increase the energy.  
To begin, we establish properties of the gradient and Hessian of the functionals $E_\epsilon (\phi)$ for vectors $0 \leq \phi \leq 1$.  Then, using those properties, we introduce rearrangement methods for finding optimizers and prove that our optimization schemes reduce the energy. 
Then, we discuss how to adapt these results to the partitioning problem.  
Lastly, we demonstrate how one can easily add a semi-supervised component to our algorithm.   

In \Cref{s:NumRes}, we apply our methods to a variety of model graphs, as well as some empirical data sets to assess their performance. In the subgraph setting, we consider how well we do detecting communities in a family of model graphs related to stochastic block models, made up of a number of random Erd\H{o}s-R\'enyi (ER) communities of various sizes and on various scales.  The model graphs are designed such that the overall degree distribution is relatively similar throughout.  We demonstrate community detectability and algorithm efficacy thresholds by varying a number of parameters in the graph models.  We also consider directed graph models of cycles connected to Erd\H{o}s-R{\'e}nyi graphs, on which our methods perform quite well.  For the partitioners, we also consider related performance studies over our model graph families, as well as on a large variety of clustering data sets.  

We conclude in \Cref{s:disc} with a discussion including possible future directions and applications of these methods.  

\section{Analysis of our proposed methods}
\label{s:Analysis}

In this section, we first analyze the relaxed subgraph detection problem \cref{exp:l1_opt} and the relaxed graph partitioning problem \Cref{e:minEscapeRelax_alt}. Then, we propose and analyse computaitonal methods for the problems. As noted above, we assume throughout that the graph is (strongly) connected.
%Let us consider an energy on the subgraph $\tilde V \subset V$ such that $|\tilde V| = k$ to be an energy of the following form.  Given $\epsilon > 0$, we define
%\begin{equation}
%\label{l1energy}
%E_\epsilon (\phi) = \left\|  \left[  (D-A) + \frac{1}{\epsilon} (1-\phi) \right]^{-1} d \right\|_{\ell^1 (G)}. 
%\end{equation}
\subsection{Analysis of the relaxed subgraph detection problem and the relaxed graph partitioning problem}

%This energy would be chosen to give a modified mean exit time defined by the non-degree weighted mean of the regularized exit time approximation,
%\[
%\tau_3 (S) = \frac{1}{|S|} \sum_{i \in S} (u)_i
%\]
%for
%\[
%u =  \left[  (D-A) + \frac{1}{\epsilon} (1-\phi) \right]^{-1} d := L_\phi^{-1} d.
%\]

For fixed $\epsilon > 0$ and 
$\phi \in [0,1]^{|V|}$, denote the operator on the RHS of  \Cref{e:relax} by
$L_\phi := D-A +  \frac{1}{\epsilon} (1-\phi)$. 

\begin{lemma}[Discrete maximum principle]
\label{lem:max}  
Given the regularized operator $L_\phi$ and a vector $f > 0$, we have $(L_{\phi}^{-1} f)_v > 0$ for all $v \in V$. Without strong connectivity, this result still holds (with $>$ replaced by $\ge$) as long as there are no leaf nodes.
\end{lemma}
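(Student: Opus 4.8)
The plan is to recognize $L_\phi = D - A + \tfrac1\epsilon\,\mathrm{diag}(1-\phi)$ as a weakly diagonally dominant Z-matrix (equivalently, a nonsingular M-matrix) and then extract positivity from a discrete minimum principle. First I would record its sign structure: the off-diagonal entries are $-A_{ij}\le 0$; the row sums are $\sum_j (L_\phi)_{ij} = \big(D_{ii}-\sum_j A_{ij}\big) + \tfrac1\epsilon(1-\phi_i) = \tfrac1\epsilon(1-\phi_i)\ge 0$, using that the Laplacian block $D-A$ has vanishing row sums; and the diagonal entries $D_{ii}+\tfrac1\epsilon(1-\phi_i)$ are strictly positive exactly when node $i$ has positive out-degree (no leaf) or $\phi_i<1$. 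This is precisely the structure a maximum-principle argument needs.

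For the positivity itself, I would set $u = L_\phi^{-1}f$ (invertibility is treated below) and pick $w\in\arg\min_v u_v$. Using $D_{ww}=\sum_j A_{wj}$, the $w$-th equation rewrites as
\[
\sum_{j} A_{wj}(u_w - u_j) + \tfrac1\epsilon(1-\phi_w)\,u_w = f_w .
\]
If the minimum value $u_w$ were $\le 0$, then every term on the left is nonpositive: the first because $A_{wj}\ge 0$ and $u_w-u_j\le 0$ at a minimizer, the second because $1-\phi_w\ge 0$ and $u_w\le 0$. This contradicts $f_w>0$, forcing $\min_v u_v>0$, hence $u>0$ throughout. This step uses only the Z-structure and $0\le\phi\le 1$ (the $j=w$ term vanishes, so self-loops are harmless) and, notably, requires neither irreducibility nor connectivity.

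The remaining and, I expect, principal difficulty is \emph{invertibility} together with the precise role of strong connectivity. I would factor $L_\phi = \tilde D\,(I-B)$ with $\tilde D = D + \tfrac1\epsilon\,\mathrm{diag}(1-\phi)$ and $B = \tilde D^{-1}A\ge 0$, well defined once there are no leaves so that $\tilde D_{ii}>0$. The row sums of $B$ equal $D_{ii}/\big(D_{ii}+\tfrac1\epsilon(1-\phi_i)\big)\le 1$, strictly less than $1$ wherever $\phi_i<1$; since the relevant $\phi$ (e.g.\ $\chi_S$ with $S\ne V$, or any feasible point of \eqref{exp:l1_opt}) is not identically $1$, $B$ is substochastic with at least one deficient row. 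Under strong connectivity $B$ is irreducible, so Perron--Frobenius yields $\rho(B)<1$, whence $I-B$ is invertible and $L_\phi^{-1} = \big(\sum_{n\ge 0}B^n\big)\tilde D^{-1}$; irreducibility makes this inverse entrywise strictly positive, recovering $(L_\phi^{-1}f)_v>0$ for every $v$ and also making the minimum-principle solve well posed. Without strong connectivity, $B$ is merely reducible, and the Neumann series still converges as long as every closed strongly connected component contains a deficient row (guaranteed away from leaves when $\phi\not\equiv 1$ there), which gives only $L_\phi^{-1}\ge 0$; since the minimum principle already delivers strict positivity whenever the system is solvable, the stated weakening to $\ge$ is the safe statement covering the borderline reducible configurations where closed components can degenerate. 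Disentangling exactly which nodes may saturate the bound in that reducible regime is the delicate bookkeeping I would expect to occupy most of the write-up.
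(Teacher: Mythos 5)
Your proposal is correct, and in fact it subsumes the paper's argument while adding a genuinely different core mechanism. The paper's proof is exactly your third paragraph in embryonic form: it factors $L_\phi = \bigl(D + \tfrac1\epsilon(1-\phi)\bigr) - A$, writes $L_\phi^{-1} = \sum_{n\ge 0}\bigl[\tilde D^{-1}A\bigr]^n \tilde D^{-1}$ with $\tilde D = D + \tfrac1\epsilon(1-\phi)$, and concludes positivity in one sentence from strong connectivity; it does not justify convergence of the Neumann series (i.e.\ that $\rho(\tilde D^{-1}A)<1$), which your substochasticity-plus-Perron--Frobenius step supplies, and which genuinely requires $\phi\not\equiv 1$ (otherwise $L_\phi=L$ is singular), a hypothesis the paper leaves implicit. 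What you do differently is lead with the discrete minimum principle at the argmin of $u$: that argument is more elementary, needs neither irreducibility nor the series, and delivers strict positivity of any solution directly from the Z-structure and the zero row sums of $D-A$. The trade-off is that the minimum principle says nothing about existence, so you still need the M-matrix/Neumann-series machinery for invertibility --- which is why your write-up ends up containing both routes. One small observation worth recording: since the $n=0$ term of the series is already $\tilde D^{-1}$, and since your minimum principle gives strict positivity whenever a solution exists, the weakening to $\ge$ in the reducible case of the lemma statement is only ever needed to cover configurations where invertibility itself is in doubt (e.g.\ a closed strongly connected component on which $\phi\equiv 1$), not configurations where a solution exists but merely touches zero; your closing paragraph gestures at this but could state it outright.
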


\begin{proof}
Writing $L_\phi = \left( D + \frac{1}{\epsilon} (1-\phi) \right)  - A$, we observe that 
\begin{align*}
L_\phi^{-1} & = \left(   \left( D + \frac{1}{\epsilon} (1-\phi) \right) \left(  I   -   \left( D + \frac{1}{\epsilon} (1-\phi) \right)^{-1} A \right) \right)^{-1}  \\
& = \left(  I   -   \left( D + \frac{1}{\epsilon} (1-\phi) \right)^{-1} A \right)^{-1}   \left( D + \frac{1}{\epsilon} (1-\phi) \right)^{-1} \\
& = \sum_{n=0}^\infty  \left[ \left( D + \frac{1}{\epsilon} (1-\phi) \right)^{-1} A \right]^n \left( D + \frac{1}{\epsilon} (1-\phi) \right)^{-1}.
\end{align*}
Since all entries in the corresponding matrices are positive (by strong connectivity), the result holds.
\end{proof}

For simplicity, in the following we consider simply setting the potential $X : = \epsilon^{-1}  ( 1- \phi)$ and we use $X$ and ${\rm diag}\ X$ interchangeably for graph Schr\"odinger operators of the form $L_X := D-A + X$ and solutions of the Poisson equation $L_X u = d$. We can then consider the related energy functional
\begin{equation}
\label{e:l1energy_alt}
E (X) : =  \left\|  \left[  L + X \right]^{-1} d \right\|_{\ell^1 (V)} = \| u\|_{
\ell^1(V)} .
\end{equation}

\begin{lemma}
\label{diff:lem}
The gradient of $E(X)$ with respect to $X$ is given by
\begin{equation}
  \label{Jgrad}
   \nabla E = - u \odot v 
\end{equation}
where $\odot$ denotes the Hadamard product and 
\begin{equation}
    \label{uvdef}
    u = (L+X)^{-1} d,  \ \  v = (L+X)^{-T} e.
\end{equation}  
Here $e$ is the all-ones vector. The Hessian of $E(X)$ with respect to $X$ is then given by
\begin{equation}
\label{Jhess}
    H = \nabla^2 E =  (L + X)^{-1} \odot W + (L + X)^{-T} \odot W^T
\end{equation}
where
\begin{equation*}
W := u \otimes v 
\end{equation*}
where $\otimes$ is the Kronecker (or outer) product.
\end{lemma}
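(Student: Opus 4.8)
The plan is to use the discrete maximum principle (\Cref{lem:max}) to linearize the $\ell^1$ norm, and then reduce everything to differentiating the inverse of $L+X$.

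First I would note that $d = D\mathbf{1} > 0$ and, since $X \ge 0$, \Cref{lem:max} applies to $M := L + X$ and gives that $u = M^{-1} d$ is strictly positive entrywise. The point of this is that the absolute values in the $\ell^1$ norm disappear: on the open set of potentials $X$ for which $u$ stays positive (an open condition, since $X \mapsto u$ is rational hence continuous) we have
\[
E(X) = \|u\|_{\ell^1(V)} = \sum_{i \in V} u_i = e^T u = e^T (L+X)^{-1} d,
\]
where $e$ is the all-ones vector. Thus $E$ is a genuinely smooth (rational) function of the diagonal entries of $X$ near any admissible $X$, so its gradient and Hessian exist and may be computed termwise.

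Next I would compute the gradient. Writing $X = \operatorname{diag}(x)$ and letting $e_m$ denote the $m$-th standard basis vector (so that $\partial_{x_m} M = e_m e_m^T$), the standard identity $\partial_{x_m} M^{-1} = -M^{-1}(\partial_{x_m} M)M^{-1}$ applied to $E = e^T M^{-1} d$ yields
\[
\partial_{x_m} E = -\big(e^T M^{-1} e_m\big)\big(e_m^T M^{-1} d\big) = -v_m\, u_m,
\]
after recognizing $u = M^{-1} d$ and $v = M^{-T} e$, the latter because $e^T M^{-1} e_m = (M^{-T} e)_m = v_m$. This is exactly $\nabla E = -u \odot v$.

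For the Hessian I would differentiate once more, using $\partial_{x_n} u_m = -u_n (M^{-1})_{mn}$ and $\partial_{x_n} v_m = -v_n (M^{-T})_{mn}$, both consequences of the same inverse-derivative identity. Substituting into $H_{mn} = \partial_{x_n}\!\big(-u_m v_m\big)$ produces two rank-one contributions, one weighted by $(M^{-1})_{mn}$ and one by $(M^{-T})_{mn}$, which reassemble into the Hadamard-product form $H = (L+X)^{-1}\odot W + (L+X)^{-T}\odot W^T$ with $W = u \otimes v$. I expect the only delicate point to be the index bookkeeping in this last step: one must track which copy of the inverse (transposed or not) multiplies the factor $u_m v_n$ versus $u_n v_m$, and fix the outer-product convention for $W$ consistently, since a mismatch silently interchanges $(L+X)^{-1}$ and $(L+X)^{-T}$. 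The a priori symmetry of $H$ (from equality of mixed partials) is a convenient check on the final assembled expression, and strong connectivity enters only through \Cref{lem:max}, to guarantee that $M$ is invertible and $u > 0$ so that the linearization $E = e^T u$ --- on which the whole computation rests --- is valid.
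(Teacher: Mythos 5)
Your proposal is correct and follows essentially the same route as the paper: both arguments invoke the discrete maximum principle to replace $\|u\|_{\ell^1}$ by $\langle e,u\rangle$, and your use of $\partial_{x_m}M^{-1}=-M^{-1}(\partial_{x_m}M)M^{-1}$ is algebraically identical to the paper's implicit differentiation of the system $(L+X)u=d$. The only cosmetic difference is that the paper tracks $\partial u/\partial X_j$ as a vector before pairing with $e$, whereas you differentiate the scalar $e^TM^{-1}d$ directly; your sign $\nabla E=-u\odot v$ matches the lemma statement (the paper's proof body contains a sign typo at that point).
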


\begin{proof}
Write $e_j$ as the indicator vector for the $j$th entry. First, differentiating~\cref{uvdef} with respect  to $X_j$, we compute 
$$
(L + X) \frac{\partial u}{ \partial X_j} = - e_j \odot u
\qquad \implies \qquad 
\frac{\partial u}{ \partial X_j} = -  \langle e_j,  u \rangle (L + X)^{-1} e_j. 
$$
Taking the second derivative, we obtain 
\begin{align*}
(L + X) \frac{\partial^2 u}{ \partial X_j \partial X_k} 
&= - e_j \left\langle e_j,   \frac{\partial u}{ \partial X_k} \right\rangle  
- e_k \left\langle e_k ,  \frac{\partial u}{ \partial X_j} \right\rangle \\ 
&=  e_j \langle e_k,  u \rangle  \left \langle e_j, (L + X)^{-1} e_k \right\rangle
+  e_k \langle e_j,  u \rangle  \left \langle e_k, (L + X)^{-1} e_j \right\rangle, 
\end{align*}
which implies that 
$$
\frac{\partial^2 u}{ \partial X_j \partial X_k} =
\left \langle e_j, (L + X)^{-1} e_k \right\rangle  \langle e_k,  u \rangle (L+X)^{-1}e_j + \left \langle e_k, (L + X)^{-1} e_j \right\rangle 
\langle e_j,  u \rangle (L+X)^{-1}e_k. 
$$

By the maximum principle (Lemma~\ref{lem:max}), $u$ is positive and we can write 
$ E(X) = \| u \|_1 =  \langle e, u \rangle$. Thus, the gradient is
\begin{align*}
    \frac{\partial E}{ \partial X_j} 
&= \left \langle e, \frac{\partial u}{ \partial X_j} \right \rangle \\
& = - \langle (L+X)^{-T} e, e_j \rangle \langle u, e_j \rangle,
\end{align*}
or in other words
\[
\nabla_X E = u \odot v
\]
for $u$ and $v$ as in \eqref{uvdef}.

For the Hessian, we have
\begin{align*}
&\frac{\partial^2 E}{ \partial X_j \partial X_k} 
= \left \langle e, \frac{\partial^2 u}{ \partial X_j \partial X_k} \right \rangle \\
& \hspace{.2cm} = \left \langle e_k, (L + X)^{-1} e_j \right\rangle 
\langle u,  e_j \rangle \left \langle  e_k , v\right \rangle + \left \langle e_j, (L + X)^{-1} e_k \right\rangle  \langle v, e_j \rangle  \left \langle e_k,u \right \rangle.
\end{align*}
Thus, the Hessian can be written 
$$
H = \nabla^2 E = (L + X)^{-1} \odot W + (L + X)^{-T} \odot W^T
$$
where
\begin{equation*}
W := u \otimes v.
\end{equation*}
as claimed.
\end{proof}

\begin{remark}
    If $L$ is symmetric, the above statements can be simplified greatly to give
$$
H = \nabla^2 E = (L + X)^{-1} \odot (W+W^T)
$$
where
\begin{equation*}
W + W^T := u \otimes v + v \otimes u  =  \frac{1}{2} (u + v) \otimes (u+w) - \frac{1}{2} (u - v) \otimes (u- v) . 
\end{equation*}
\end{remark}

\begin{proposition}
\label{p:Covexity}  
For $f > 0$ fixed, let $u$ satisfy $
(L+X) u = f$.
The mapping 
$X \mapsto E(X) = \| u \|_1$ is strongly convex on $\{X \geq 0, \ X \neq 0 \}$. 
\end{proposition}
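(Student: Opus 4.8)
The plan is to bypass the explicit Hessian from Lemma~\ref{diff:lem} and instead exhibit $E(X)$ directly as a positive combination of log-convex functions, via a random-walk (Neumann series) expansion. First, by the discrete maximum principle (Lemma~\ref{lem:max}) the solution $u = (L+X)^{-1} f$ is strictly positive, so $E(X) = \|u\|_1 = \langle e, u \rangle = e^{T} (L+X)^{-1} f$. I would then factor $L + X = (D+X) - A = (D+X)(I - B)$ with $B := (D+X)^{-1} A \ge 0$. Because $D_{ii} = \sum_j A_{ij}$, the matrix $B$ is substochastic with row sums $D_{ii}/(D_{ii}+X_i) \le 1$, and strictly less than $1$ wherever $X_i > 0$; since $X \ne 0$ and the graph is strongly connected, $B$ is irreducible with $\rho(B) < 1$, so the Neumann series $(L+X)^{-1} = \sum_{n \ge 0} B^n (D+X)^{-1}$ converges.

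Expanding this series and collecting walks $i_0 \to i_1 \to \cdots \to i_n$ gives
\[
E(X) = \sum_{n\ge 0} \sum_{i_0,\dots,i_n} \Big( \prod_{k=0}^{n-1} A_{i_k i_{k+1}}\Big) f_{i_n} \; \prod_{k=0}^{n} \frac{1}{D_{i_k i_k} + X_{i_k}},
\]
where each coefficient $\big(\prod_k A_{i_k i_{k+1}}\big) f_{i_n}$ is nonnegative. The key observation is that each factor $X_j \mapsto (D_{jj}+X_j)^{-1}$ is log-convex on $\{X_j > -D_{jj}\}$ (since $X_j \mapsto \log(D_{jj}+X_j)$ is concave), that products of log-convex functions are log-convex, and that log-convex functions are convex. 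Hence every walk term is a nonnegative constant times a convex function of $X$, and a pointwise-convergent sum of convex functions is convex; this already yields convexity of $E$.

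To upgrade to strong convexity I would isolate the $n=0$ contribution $E_0(X) = \sum_{i} f_i/(D_{ii}+X_i)$, which is smooth with the positive-definite diagonal Hessian $\nabla^2 E_0 = \operatorname{diag}\!\big(2 f_i /(D_{ii}+X_i)^3\big)$, while the remainder $E - E_0$ (the $n \ge 1$ terms) is convex by the argument above. Writing $E = E_0 + (E - E_0)$ and using that adding a convex function preserves the strong-convexity modulus, one concludes that $E$ is strongly convex with modulus $m = \min_i 2 f_i/(D_{ii}+X_i)^3$. On the bounded parameter range relevant to the relaxation, namely $0 \le X \le \varepsilon^{-1}$, this gives a uniform positive modulus $m = \min_i 2 f_i/(D_{ii}+\varepsilon^{-1})^3 > 0$.

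The main obstacle is exactly the temptation to argue from the explicit Hessian $H = (L+X)^{-1}\odot W + (L+X)^{-T}\odot W^{T}$: all of its entries are positive (again by Lemma~\ref{lem:max}), but entrywise-positive matrices are generically indefinite, so no direct spectral bound is available and the M-matrix structure must be used in another way. The walk expansion is what makes that structure usable, reducing positive-definiteness of a coupled, nonsymmetric Hessian to the one-dimensional log-convexity of $(D_{jj}+X_j)^{-1}$. A secondary subtlety worth flagging is that the modulus degenerates as $X \to \infty$, so uniform strong convexity holds on bounded subsets of $\{X \ge 0\}$ (in particular on the operative set $X \le \varepsilon^{-1}$) rather than globally.
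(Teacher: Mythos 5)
Your proposal is correct and follows essentially the same route as the paper: both expand $(L+X)^{-1}$ as a Neumann series in $(D+X)^{-1}A$, reduce convexity of each walk term to convexity of products $\prod_i (D_{ii}+X_i)^{-\alpha_i}$ (you via log-convexity, the paper via a direct Hessian computation of $\prod_i x_i^{-\alpha_i}$), and extract strong convexity from the diagonal $n=0$ term. Your explicit modulus and the remark that it degenerates as $X\to\infty$ are a slightly more careful rendering of the paper's restriction to a bounded box $[0,X_\infty]^n$, but the argument is the same.
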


\begin{proof} 

We wish to show that 
\[E(X) = e^T (L +X)^{-1} d \] 
is convex on $[0,X_{\infty}]^n$ for fixed constant $X_{\infty}$. 
Replacing $D+X$ with $X$, this is equivalent to 
\[e^T (X-A)^{-1} d\]
being convex on $\{V : d_i + X_{\infty} \ge X_i \ge d_i\}$. 
Expanding, we have 
\[e^T \left( I - X^{-1} A \right)^{-1} X^{-1} d = e^T \sum_{k=0}^{\infty} \left(X^{-1} A \right)^k X^{-1} d.\]
So it is enough to show that 
\[e^T \left(X^{-1} A \right)^k X^{-1} d\]
is convex for each $ k > 0 $. 
This is true as long as
\[ f(x) = \prod_i x_i^{-\alpha_i}\]
is convex for any $\alpha = (\alpha_1,\cdots,\alpha_n)$. 
Computing second derivatives gives
\[f_{X_iX_i}(X) = f(X) \alpha_i (\alpha_i + 1) X_i^{-2}\]
and 
\[f_{X_i X_j}(X) = f(X) \alpha_i \alpha_j  X_i^{-1} X_j^{-1}.\] 
So the Hessian of $f$ is 
\[f(X) \left[ (\alpha X^{-1})^T (\alpha X^{-1}) + \textrm{diag}(\alpha X^{-2})\right],\] which is clearly positive semi-definite, being the sum of positive semi-definite matrices.

To observe strong convexity, recognize that the $k = 0$ term contributes a term to the Hessian of the form $DX^{-2}$, which is positive definite on the domain in question.
\end{proof}

Proposition \ref{p:Covexity}  gives that $\phi \to E_\varepsilon(\phi)$ is strongly convex on $\mathbb R^{|V|}_+$, so $\{\phi_\ell\}_{\ell \in [K]}  \mapsto \mathcal{E}^\varepsilon\left( \{\phi_\ell\}_{\ell \in [K]} \right) $ is also convex on $\mathcal{A}_K$.  The following corollary is then immediate.

\begin{corollary}[Bang-bang solutions] \label{c:BangBang}
Every maximizer of \Cref{exp:l1_opt} is an extreme point of $\{ \phi \in [0,1]^{|V|} \colon \langle \phi,1\rangle =k \}$, {\it i.e.}, an indicator function for some vertex set $S \subset V$ with $|S| =k$.

\end{corollary}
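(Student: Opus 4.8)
The plan is to exploit the strong convexity of $E_\varepsilon$ recorded just above the statement: a strictly convex function on a compact convex set attains its maximum only at extreme points, so it suffices to identify the extreme points of the feasible polytope
\[
C := \{ \phi \in [0,1]^{|V|} \colon \langle \phi, 1\rangle = k \}
\]
and to check they are exactly the indicators $\chi_S$ with $|S| = k$. I would organize the argument as: (i) a domain check placing $C$ inside the region where Proposition~\ref{p:Covexity} gives strong convexity, (ii) a standard contradiction showing maximizers are extreme, and (iii) a combinatorial identification of the extreme points.

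First I would observe that $C$ is a compact convex polytope, being the intersection of the cube $[0,1]^{|V|}$ with the hyperplane $\langle \phi, 1\rangle = k$. Assuming the nondegenerate range $1 \le k < |V|$ (for $k = |V|$ the set $S = V$ has no complement and the problem is vacuous), every $\phi \in C$ satisfies $\sum_i \phi_i = k < |V|$, so under the affine substitution $X = \varepsilon^{-1}(1-\phi)$ we have $X \ge 0$ with at least one strictly positive entry, hence $X \neq 0$. Thus all of $C$ maps into the region $\{X \ge 0,\ X \neq 0\}$ on which Proposition~\ref{p:Covexity} applies; since $\phi \mapsto \varepsilon^{-1}(1-\phi)$ is an invertible affine map, strong convexity of $X \mapsto E(X)$ transfers to strong convexity of $\phi \mapsto E_\varepsilon(\phi)$ on all of $C$, as already asserted following the proposition.

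Next I would run the contradiction. Suppose $\phi^\star$ maximizes $E_\varepsilon$ over $C$ but is not an extreme point; then $\phi^\star = \tfrac12(\phi_1 + \phi_2)$ for distinct $\phi_1, \phi_2 \in C$, and strong convexity with modulus $m > 0$ yields
\[
E_\varepsilon(\phi^\star) \le \tfrac12 E_\varepsilon(\phi_1) + \tfrac12 E_\varepsilon(\phi_2) - \tfrac{m}{8}\|\phi_1 - \phi_2\|^2 < \max\{E_\varepsilon(\phi_1), E_\varepsilon(\phi_2)\},
\]
contradicting maximality of $\phi^\star$. Hence every maximizer lies at an extreme point of $C$. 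Finally I would characterize those extreme points by a counting argument: at an extreme point at least $|V|-1$ of the box constraints $\phi_i \in \{0,1\}$ must be active, and then the equality $\sum_i \phi_i = k$ together with integrality of $k$ forces the one remaining coordinate to be integral as well, so $\phi \in \{0,1\}^{|V|}$ with exactly $k$ ones, i.e.\ $\phi = \chi_S$ for some $|S| = k$; the converse that each such $\chi_S$ is extreme is immediate.

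The argument is essentially routine given Proposition~\ref{p:Covexity}, so the only genuinely delicate points are the two I have isolated: confirming that the entire feasible set avoids $X = 0$ (which is exactly where $k < |V|$ is needed, and without which strong convexity could fail), and the extreme-point characterization of the polytope $C$. I expect the domain check to be the one most worth stating carefully, since the strong convexity in Proposition~\ref{p:Covexity} is only claimed on $\{X \ge 0,\ X \neq 0\}$ and the corner $\phi = \chi_S$ corresponds to $X = \varepsilon^{-1}\chi_{S^c}$, which has vanishing entries precisely on $S$ but remains nonzero.
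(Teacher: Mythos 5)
Your proposal is correct and follows exactly the route the paper intends: the paper states the corollary as ``immediate'' from Proposition~\ref{p:Covexity}, and your write-up simply supplies the standard details (the domain check $X=\varepsilon^{-1}(1-\phi)\ge 0$, $X\neq 0$ for $k<|V|$; strict convexity forcing maximizers to be extreme; and the basic-feasible-solution characterization of the extreme points of the polytope). No gaps; your explicit attention to the case $k<|V|$ is a worthwhile clarification the paper leaves implicit.
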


Thus, in the language of control theory,  
\Cref{c:BangBang} shows that 
\Cref{exp:l1_opt}  
is a bang-bang relaxation of 
\eqref{e:subgraphDetection}
and that
\eqref{e:minEscapeRelax_alt}  
is a bang-bang relaxation of 
\eqref{e:minEscape}. 

\bigskip

\begin{corollary}
\label{Hess:cor}
Since the set of values $(x_1,\dots,x_n) \in \mathbb{R}^n_+$ with which we are concerned is convex and $E$ is $C^2$ in $X$, the resulting Hessian matrix $H$ is positive definite.  
\end{corollary}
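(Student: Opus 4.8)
The plan is to deduce the positive definiteness of $H$ directly from the strong convexity established in Proposition~\ref{p:Covexity} together with the explicit, everywhere-defined Hessian formula of Lemma~\ref{diff:lem}. The standard fact I would invoke is the Hessian characterization of convexity: for a twice continuously differentiable function $f$ on an open convex set $\Omega \subseteq \mathbb{R}^n$, $f$ is convex on $\Omega$ if and only if $\nabla^2 f(X) \succeq 0$ for every $X \in \Omega$, and $f$ is strongly convex with modulus $\mu > 0$ if and only if $\nabla^2 f(X) \succeq \mu I$ for every $X \in \Omega$. Since Lemma~\ref{diff:lem} shows that $E$ is $C^2$ on the interior of $\mathbb{R}^n_+$ with Hessian $H = \nabla^2 E$ given by \eqref{Jhess}, and Proposition~\ref{p:Covexity} shows that $E$ is strongly convex there, this characterization yields $H = \nabla^2 E(X) \succeq \mu I \succ 0$, which is the claim.

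Concretely, I would make this explicit by reusing the decomposition already carried out in the proof of Proposition~\ref{p:Covexity}. After the substitution $X \mapsto D + X$, we expand $E(X) = e^T (X - A)^{-1} d$ as the Neumann series $\sum_{k \geq 0} e^T (X^{-1} A)^k X^{-1} d$. Each summand with $k \geq 1$ is a nonnegative combination of monomials of the form $\prod_i x_i^{-\alpha_i}$ with $\alpha_i \geq 0$, each of which is convex on the positive orthant (its Hessian was computed to be $f(X)\big[(\alpha X^{-1})^T(\alpha X^{-1}) + \mathrm{diag}(\alpha X^{-2})\big] \succeq 0$), so these terms contribute a positive semidefinite block to $H$. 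The $k = 0$ term contributes exactly $D X^{-2}$, a diagonal matrix whose entries are strictly positive on the domain of interest. Thus $H$ is a sum of positive semidefinite matrices together with one strictly positive definite diagonal matrix, hence positive definite.

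The step requiring the most care is matching the domain to the convexity statement. The characterization of strong convexity via a uniform Hessian lower bound $\mu I$ requires both that the domain be convex (so that the pointwise second-derivative condition controls the function along every segment) and that one works on its interior, since the Hessian condition is a statement about interior points; this is precisely why the corollary's hypotheses single out convexity of the parameter set and $C^2$ regularity of $E$. I would also note that the term $D X^{-2}$ degenerates as the entries of $X$ grow, so a genuinely uniform modulus $\mu > 0$ is only available once the parameters are restricted to a bounded box such as the $[0, X_\infty]^n$ used in Proposition~\ref{p:Covexity}; on the full open orthant one still obtains pointwise positive definiteness of $H$, which is all that the corollary asserts.
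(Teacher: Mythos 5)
Your proposal is correct and follows essentially the same route as the paper: the corollary is stated there as an immediate consequence of Proposition~\ref{p:Covexity}, and your argument simply makes that explicit by reusing the same Neumann-series decomposition, with the $k\ge 1$ terms contributing positive semidefinite blocks and the $k=0$ term contributing the strictly positive definite diagonal $DX^{-2}$. Your remark distinguishing pointwise positive definiteness on the open orthant from a uniform modulus on a bounded box is a careful and accurate refinement of what the paper leaves implicit.
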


\begin{remark}
Note that though the Hadamard product of two positive definite matrices is positive definite, Corollary \ref{Hess:cor} is not obvious from the structure of the Hessian, given that the matrix $W$ is indefinite when $u$ and $w$ are linearly independent.  As a result, this positive definiteness is strongly related to the structure of the $L+X$ matrix and its eigenvectors.  
\end{remark}

%\begin{corollary}
%\label{cor:trace}
%Let $L$ be symmetric.  For $y \in \mathbb{R}^N$ and %$A,B \in \mathbb{R}^{N \times N}$, we have
%\[
%\langle A \odot B y,y \rangle = \Tr [  D_y A D_y B^T ]
%\]
%for $D_y = \mathrm{diag} (y)$.
%Let
%\[
%Y =  [(L+X)^{-1/2} D_y (L+X)^{-1/2}]
%\]
%and
%\[
%Z =   [ (L+X)^{-1/2} (e \otimes d + d \otimes e)(L+X)^{-1/2} ] 
%\]
%then
%\begin{align*}
%\langle (L+X)^{-1} \odot W) y, y \rangle  & = \Tr (D_y (L+X)^{-1} D_y (L+X)^{-1} (e \otimes d + d \otimes e) (L+X)^{-1} )  \\
%&  = \Tr ( Y^2  Z)
%\end{align*}
%\ZB{I still don't see how we got this trace to work out above.} using that for the trace one can cycle the order.  The final operator is a product of $3$ symmetric matrices, hence we may write
%\begin{equation*}
%\Tr ( Y^2  Z) = \Tr [YZY ] = \Tr [ ZY^2 ] \geq 0
%\end{equation*}
%using \Cref{p:Covexity}.
%\end{corollary} 
%\Cref{cor:trace} is not initially obvious, even though matrix $Z$ has all positive entries. 

\subsection{Optimization scheme}

\subsubsection{Subgraph detector}
We solve~\Cref{exp:l1_opt} using rearrangement ideas as follows. After initializing $S$ (randomly in our experiments), we use the gradient~\Cref{Jgrad} to find the locally optimal next choice of $S$, and then iterate until convergence (typically $<10$ iterations in our experiments). More explicitly, we follow these steps:
%Then, we generate a subset, $S$, of $k$ vertices (either randomly or using kmeans on $A$ to initialize) and build a characteristic function $\chi_S$ that is $1$ on this set and $0$ otherwise.  We (implicitly) compute the gradient, then we rearrange $S$ to be $1$ on the $k$ largest elements of $uv$ for $L^T v + \epsilon^{-1} (1 - \chi_{S}) v = 0$ and $0$ otherwise.    
\begin{align}
L u + \epsilon^{-1} (1- \chi_{S^0}) u  & = d, \label{eq:grad_comp1} \\
L^T v + \epsilon^{-1} (1- \chi_{S^0}) v  & = 1.
\label{eq:grad_comp}
\end{align}
The update, $S^1$, then contains those nodes $\ell$ that maximize $u_{\ell}v_{\ell}$.

\begin{algorithm}[t]
\caption{Subgraph detector}
\label{alg:subgraph}
\begin{algorithmic}
\State Input $S^0 \subset V$.
\While{$S^t \ne S^{t-1}$}
\State Solve~\Cref{eq:grad_comp1} and~\Cref{eq:grad_comp} for $u$ and $v$.
\State  Assign vertex $\ell$ to subgraph $S^1$ if $\nabla_\phi E$ is optimized. That is, solve the following sub-problem.
\begin{equation}
\max_{|S| = k}  \ \sum_{\ell \in S} u(\ell) \cdot v (\ell)  .
\label{subgraph_inner}
\end{equation}
(Note that~\Cref{subgraph_inner} is easily solved by taking the $k$ indices corresponding to the largest values of $u(\ell) \cdot v(\ell)$, breaking ties randomly if needed.)
\State Reset now, building on $ S^1 \subset V$ accordingly and repeat until $S^n = S^{n-1}$.
\EndWhile
\end{algorithmic}
\end{algorithm}

Pseudocode for this approach is given in~\Cref{alg:subgraph}, which has the following ascent guarantee:
\begin{proposition}
\label{prop:sgascent}
Every nonstationary iteration of~\cref{alg:subgraph} strictly increases the energy $E_\epsilon$. \Cref{alg:subgraph} terminates in a finite number of iterations. 
\end{proposition}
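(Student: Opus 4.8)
The plan is to prove the ascent guarantee by showing that each step of the algorithm is a coordinate-style maximization of a \emph{linearization} of the strongly convex energy, and then invoke convexity to upgrade the linear gain into a strict energy gain. Concretely, fix the current iterate $\phi^0 = \chi_{S^0}$ and write $E_\epsilon$ as a function of the potential $X = \epsilon^{-1}(1-\phi)$, so that by \Cref{diff:lem} the gradient at $S^0$ is $\nabla_\phi E_\epsilon = u \odot v$ with $u,v$ given by \eqref{eq:grad_comp1}--\eqref{eq:grad_comp}. The inner problem \eqref{subgraph_inner} chooses $S^1$ to maximize $\langle \nabla_\phi E_\epsilon, \chi_S\rangle = \sum_{\ell\in S} u(\ell)v(\ell)$ over all admissible indicator functions with $|S|=k$. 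Thus $S^1$ is, by construction, the maximizer of the first-order (linear) model of $E_\epsilon$ at $\phi^0$ over the feasible vertex sets.

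First I would record the elementary optimization fact that over the feasible set $\{\phi\in\{0,1\}^{|V|}:\langle\phi,1\rangle=k\}$, maximizing the linear functional $\langle \nabla_\phi E_\epsilon(\phi^0),\phi\rangle$ is solved exactly by taking the $k$ largest entries of $u\odot v$, which is precisely the update rule. Next I would argue strict ascent at a nonstationary step. Since $E_\epsilon$ is strongly convex in $X$ (Proposition \ref{p:Covexity}) and $X$ is an affine, invertible reparametrization of $\phi$, the energy $\phi\mapsto E_\epsilon(\phi)$ is convex, hence lies above its tangent plane at $\phi^0$:
\begin{equation*}
E_\epsilon(\phi^1) \ \geq \ E_\epsilon(\phi^0) + \langle \nabla_\phi E_\epsilon(\phi^0),\, \phi^1 - \phi^0\rangle.
\end{equation*}
Because $S^1$ maximizes the linear term over feasible sets while $S^0$ is itself feasible, we have $\langle \nabla_\phi E_\epsilon(\phi^0),\phi^1\rangle \geq \langle \nabla_\phi E_\epsilon(\phi^0),\phi^0\rangle$, so the inner product term is nonnegative, giving $E_\epsilon(\phi^1)\geq E_\epsilon(\phi^0)$. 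To get \emph{strict} increase when the iteration is nonstationary (i.e. $S^1\neq S^0$), I would use strict convexity: equality in the tangent-plane bound forces $\phi^1=\phi^0$, contradicting nonstationarity; alternatively, the linear gain $\langle \nabla_\phi E_\epsilon(\phi^0),\phi^1-\phi^0\rangle$ is strictly positive whenever the swapped-in entries of $u\odot v$ strictly exceed the swapped-out ones, and strict convexity guarantees the energy gain dominates this. Either route yields $E_\epsilon(\phi^1) > E_\epsilon(\phi^0)$.

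Finally, termination in finitely many iterations follows from a standard monotonicity-plus-finiteness argument: the feasible set of indicator functions $\{\chi_S : |S|=k\}$ is finite (there are $\binom{|V|}{k}$ of them), the sequence of energies $E_\epsilon(\chi_{S^t})$ is strictly increasing on nonstationary steps, and a strictly increasing sequence on a finite set cannot revisit a value, so no set can repeat until the process stops. Hence the algorithm reaches $S^n = S^{n-1}$ after finitely many steps. The main obstacle I anticipate is the handling of ties in the inner argmax and the precise justification of \emph{strict} ascent at a degenerate swap: one must be careful that when $S^1 \neq S^0$ the linear improvement is genuinely positive (not merely nonnegative from reshuffling equal-valued entries), which is where the strong convexity from Proposition \ref{p:Covexity} does the essential work in separating the true energy value from its linear surrogate.
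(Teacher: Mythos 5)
Your proposal is correct and follows essentially the same route as the paper's proof: convexity of $E_\epsilon$ gives the above-tangent-plane bound, the inner argmax makes the linear term $\langle \nabla_\phi E_\epsilon, \chi_{S^1}-\chi_{S^0}\rangle$ nonnegative, strict (strong) convexity upgrades this to a strict increase whenever $S^1\neq S^0$, and termination follows from strict monotonicity over the finite family of $k$-subsets. Your added care about ties in the argmax (where the linear gain may vanish but strong convexity still forces strict ascent) is exactly the point the paper's chain of inequalities relies on.
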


\begin{proof}
Let $S^0$ and $S^1$ be the vertex subsets for successive iterations of the method. 
Define  $W = \chi_{S^1} - \chi_{S^0}$. Assuming $W \neq 0$, by strong convexity (Theorem~\ref{p:Covexity}) and the formula for the gradient \eqref{Jgrad}, we compute
\begin{subequations}
\begin{align}
E_\epsilon (\chi_{S^1}) &> E_\epsilon (\chi_{S^0}) + \frac{1}{\epsilon} \langle W, uv \rangle \\
&= E_\epsilon (\chi_{S^0}) + \frac{1}{\epsilon} \left( \sum_{i \in S_1} u_i v_i - \sum_{i \in S_0} u_i v_i \right)   \\
& \geq E_\epsilon (\chi_{S^0}). 
\end{align}
\end{subequations}
Thus, the  energy is strictly increasing on non-stationary iterates. 
Since we assume that $V$ is a finite size vertex set and the rearrangement method increases the energy, it cannot cycle and hence must terminate in a finite number of iterations.
\end{proof}

To avoid hand-selection of $\epsilon$, we always set $\epsilon = C/\lambda_F$, where $\lambda_F$ is the Frobenius norm of the graph Laplacian and $C >1$ is typically set at $C=50$ to make sure $\epsilon$ allows communication between graph vertices.  If $C$ is chosen to take a different value below, we will highlight those cases.

\subsubsection{Graph partitioner}
\label{sec:gp}

Given the success of the energy \eqref{e:l1energy}, one might naively consider partitioning the graph by maximizing an energy of the form
\begin{equation}
\label{part:energy_bad}
%\tilde{E}_{\epsilon} (S_1, S_2, \dots, S_K) = \sum_{i = 1}^K [E_\epsilon(\chi_{S_i})].
(S_1, S_2, \dots, S_K) \mapsto \sum_{i = 1}^K [E_\epsilon(\chi_{S_i})].
\end{equation}
However, it can be computed that this energy does not properly constrain the volumes of each partition in a reasonable fashion and the optimizer of this nice problem merely results in putting all the vertices in a single box.

The partition energy we initially worked to minimize instead is of the form
\begin{equation}
\label{part:energy}
%\tilde{E}_{\epsilon} (S_1, S_2, \dots, S_K) = \sum_{i = 1}^K [ |V| E_\epsilon( \chi_{S_i} )]^{-1},
 (S_1, S_2, \dots, S_K) \mapsto \sum_{i = 1}^K [ |V| E_\epsilon( \chi_{S_i} )]^{-1},
\end{equation}
since the inverses penalize putting all nodes into the same partition by making the resulting empty classes highly costly.  Intuitively, this energy functional provides an effective volume normalization of the relative gradients (similar to a K-means type scheme). However, while in practice this functional appears to work reasonably well on all graph models considered here, we were unable to prove, upon analysis of the Hessian, that rearrangements based on such an algorithm are bang-bang like the subgraph detector. 

As an alternative, we instead consider
\begin{equation}
    \label{part:energyalt}
\tilde{E}_{\delta, \epsilon} (S_1, S_2, \dots, S_K) = \sum_{i = 1}^K [1+ \delta |V| E_\epsilon( \chi_{S_i})]^{-1}.
\end{equation}
Applied to functions, $0 \leq \phi_j \leq 1$, instead of indicator functions, we consider
\begin{equation}
    \label{part:energyalt_phi}
\tilde{E}_{\delta,\epsilon} (\phi_1, \phi_2, \dots, \phi_K) = \sum_{i = 1}^K [1+ \delta |V| E_\epsilon(\phi_i)]^{-1}.
\end{equation}
We then have that 
\begin{equation}
    \label{Grad:pe}
\nabla_{\phi_j} \tilde{E} = - \frac{\delta}{[1+ \delta |V| E_\epsilon(\phi_i)]^{2}} \nabla_{\phi_j} ( |V| E_\epsilon (\phi_j))
\end{equation}
making the Hessian consist of blocks of the form
\begin{align}
    \label{Hess:pe}
\nabla^2_{\phi_j} \tilde{E} & = - \frac{\delta}{[1+ \delta |V| E_\epsilon(\phi_i)]^{2}} \nabla^2_{\phi_j} (|V| E_\epsilon (\phi_j)) \\
& \hspace{.5cm} + 2  \frac{\delta^2}{[1+ \delta |V| E_\epsilon(\phi_i)]^{3}} (\nabla_{\phi_j} ( |V| E_\epsilon (\phi_j)) ) ( \nabla_{\phi_j} ( |V| E_\epsilon (\phi_j)) )^T. \notag
\end{align}
Note, this is the sum of a negative definite operator and a rank one matrix, meaning that for $\delta$ sufficiently small, the Hessian will prove that $\tilde E$ is concave with respect to each component.  In practice, we find that taking $\delta=\epsilon$ is sufficient both for having a negative definite Hessian and generating good results with respect to our rearrangement scheme.  As such, we will generically take $\delta = \epsilon$ henceforward.

Our approach to the node partitioner is largely analogous to that of the subgraph detector, with the exception that we use class-wise $\ell^1$ normalization when comparing which values of $u \cdot v$ at each node. In detail, the algorithm is presented in Algorithm \ref{alg:partitioner}.   It is a relatively straightforward exercise applying the gradient computation for $E_\epsilon (S_i)$ from Proposition \ref{p:Covexity} to prove that the energy functional \eqref{part:energy} will decrease with each iteration of our algorithm as in Proposition \ref{prop:sgascent}.

\begin{algorithm}[t]
\caption{Graph Partitioner}
\label{alg:partitioner}
\begin{algorithmic}
\State Input $\vec{S} = \{ S_1^0, \dots, S_K^0 \}$ a $K$ partition of $V$.
\While{${\vec S}^t \ne {\vec S}^{t-1}$}
\State For $j = 1, \dots, K$, solve the equations 
\begin{align*}
L u_j + \epsilon^{-1} (1- \chi_{S_j^0}) u_j  & = d, \\
L^T v_j + \epsilon^{-1} (1- \chi_{S_j^0}) v_j  & = 1.
\end{align*}
\State  Normalize $u_j = \frac{u_j}{(1 + \epsilon \| u_j \|_{\ell^1})^2}$, $v_j = v_j$.
\State  Assign vertex $v$ to ${\vec S}^{t+1}_j$ where
\[
j=\mathrm{argmax} \{ u_1 \cdot v_1 (v) , \dots, u_K\cdot v_K (v) \}
\]
(that is, optimize $\nabla_\phi E$)
breaking ties randomly if needed.
\State Set $t = t+1$.
%\State Reset $\{ S_1^1, \dots, S_K^1 \}$ accordingly and repeat until $S_j^n = S_j^{n-1}$ for all $j  = 1, \dots, K$.
\EndWhile
\end{algorithmic}
\end{algorithm}

\subsubsection{Semi-supervised learning}

In cases where we have a labeled set of nodes $T$ with labels $\hat\phi_v \in \{ 0,1 \}$ indicating whether we want node $i$ to be in the subgraph ($\hat\phi_v = 1$) or its complement ($\hat\phi_v = 0$), we can incorporate this information into our approach as follows.

For the subgraph detector, we use
$E_{\epsilon,\lambda,T}(\phi) =  E_{\epsilon}(\phi) + \lambda \sum_{v \in T} \left( \phi_v - (1-\hat\phi_v) \right)^2$. Then the rearrangement algorithm needs to be modified at step 3 to become: Assign vertex $\ell$ to subgraph $S^1$ if $\nabla_\phi E$ is optimized
\[
\max_{|S| = k}  \ \frac1\epsilon \sum_{\ell \in S} u(\ell) \cdot v (\ell)   + 2 \lambda \sum_{v\in T}[\chi_S(v) - (1-\hat\phi_v)],
\]
where $\chi$ is the binary-valued indicator function. This again is solved by picking the largest elements
(we break ties by picking the lowest-index maximizers if needed).
Since the energy is still convex, the energy still increases at each iteration.

For the $K$-partitioner, we have a labeled set of nodes $T_i$ with labels $\hat\phi_{i,v} \in \{ 0,1 \}$, for $i = 1,\dots, K$ indicating whether we want node $v$ to be in partition element $i$, with $\sum_i \hat\phi_{i,v} = 1$ for $v\in \cup_i T_i$. We can incorporate this information into our approach by modifying the energy to be the concave functional
%\begin{equation}
%    \label{part:energyalt_phi_ssl}
%\tilde{E}_{\delta,\epsilon} (\phi_1, \phi_2, \dots, \phi_k) = \sum_{i = 1}^k [1+ \delta \| \phi_i \|_{\ell^1} E_{\epsilon,\lambda,T_i} (\phi_i)]^{-1}
%\end{equation}
%\[
%\tilde{E}_{\epsilon,\lambda}(S_1,...S_k) = \sum_i (1 + \epsilon |S_i| E_{\epsilon,\lambda}(S_i))^{-1},
%\]
\begin{equation}
    \label{eqn:parEssl}
\tilde{E}_{\epsilon,\lambda}(\phi_1,\dots,\phi_K) = \tilde{E}_{\epsilon}(\phi_1,\dots,\phi_K) - \lambda \sum_{v \in T} \sum_{j=1}^K (\phi_{j,v} - (1 - \hat \phi_{j,v}))^2
\end{equation}
with the gradient rearrangement being appropriately modified.

\section{Numerical Results}
\label{s:NumRes}

We test the performance of these algorithms both on synthetic graphs and an assortment of ``real-world" graphs. 
For the synthetic tests, we use a particular set of undirected stochastic block models which we call the MultIsCale $K$-block Escape Ensemble (MICKEE), designed to illustrate some of the data features which our algorithms handle. 
A MICKEE graph consists of $N$ nodes partitioned into $K+1$ groups of sizes $N_1$, $\ldots$, $N_K$, and $N_{K+1} = N-\sum_{j=1}^K N_j$, where $N_1<N_2<\ldots<N_K<N_{K+1}$ (see the 2-MICKEE schematic in~\cref{fig:lopsided}). 
The nodes in the first $K$ groups induce densely connected Erd\H{o}s--R\'enyi (ER) subgraphs (from which we will study escape times) while the last group forms a sparsely connected ER background graph. 
Each of the $K$ dense subgraphs is sparsely connected to the larger background graph. 
The goal is to recover one of the planted subgraphs, generally the smallest. 
A na\"ive spectral approach will often find one of the planted graphs, but we know of no way to control which subgraph is recovered. Our subgraph detector method, in contrast, can be directed to look at the correct scale to recover a specific subgraph, as we will demonstrate in the 2-MICKEE example (i.e., with two planted subgraphs). 

\begin{figure}
    \centering
    \includegraphics[width=.35\textwidth]{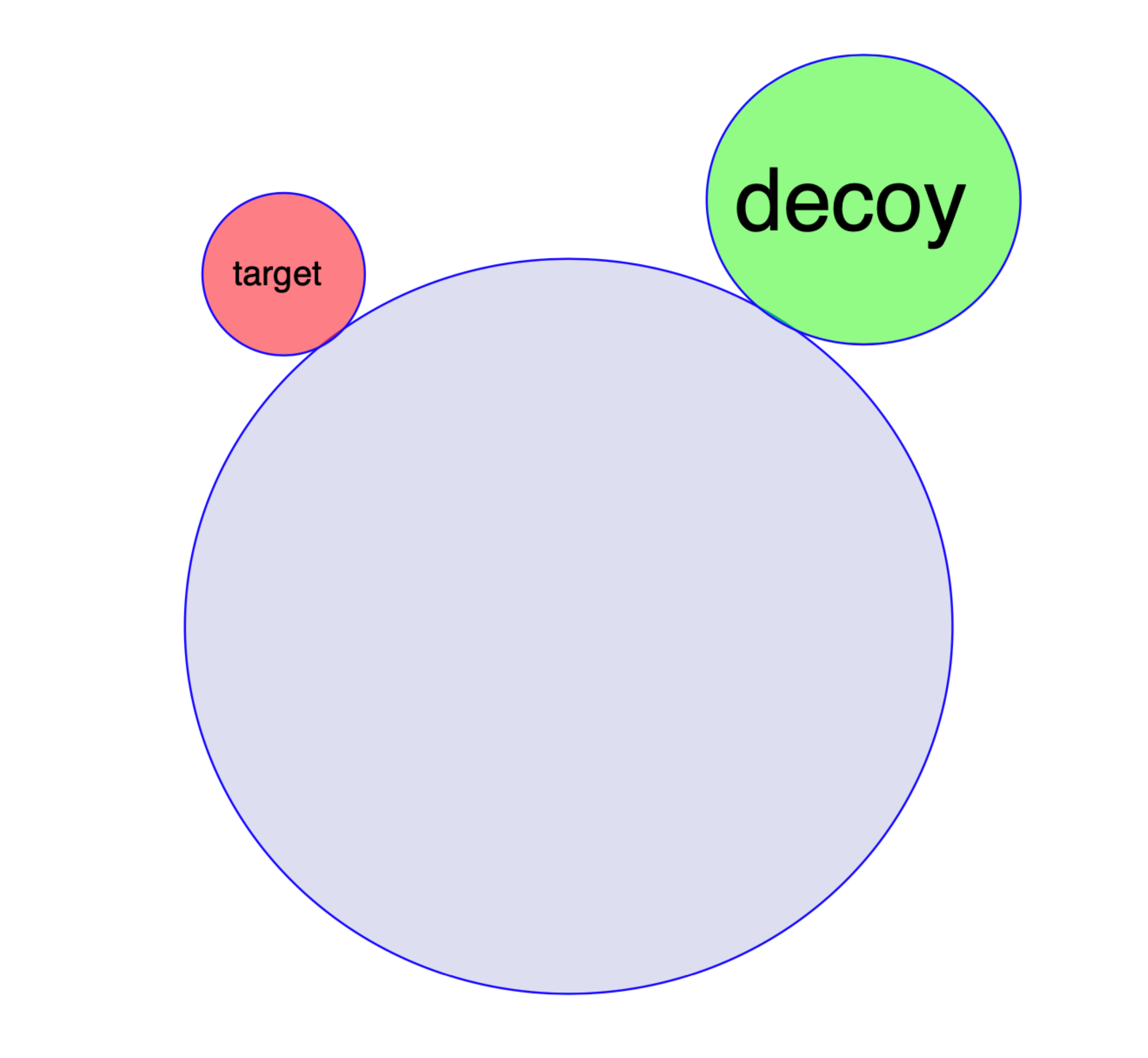}
    \caption{Schematic of a 2-MICKEE graph, with three dense subgraphs that are randomly connected to each other. Our subgraph detectors can identify the target subgraph, ignoring other planted subgraphs at different scales. Our partitioner correctly identifies each subgraph as a partition, regardless of the scale.}
    \label{fig:lopsided}
\end{figure}

We explore a number of variations on the basic MICKEE theme, including (1) making the large subgraph have a power law degree distribution (with edges drawn using a loopy, multi-edged configuration model), (2) adding more planted subgraphs with sizes ranging across several scales, (3) adding uniformly random noise edges across the entire graph or specifically between subgraphs, and (4) varying the edge weights of the various types of connections. For brevity, we refer to a MICKEE graph with $K$ planted subgraphs (not including the largest one) as a $K$-MICKEE graph.

\subsection{Subgraph Detection}
We explore the performance of~\Cref{alg:subgraph} using four benchmarks, which emphasize (1) noise tolerance, (2) multiscale detection, 
(3) robustness to heavy-tailed degree distributions, and 
(4) effective use of directed edges, respectively. In each of these tests, the target subgraph is the smallest planted subgraph.

\subsubsection*{Robustness to noise.} In~\cref{fig:3earnonlocal_sg} we visualize results from~\Cref{alg:subgraph} on $3$-MICKEE graphs, varying the amount and type of noise. While it is possible to get a bad initialization and thus find a bad local optimum the subgraph detector usually finds the target exactly, except in the noisiest regime (which occurs roughly at the point where the number of noise edges is equal to the number of signal edges).
 
\begin{figure}
\centering
\begin{subfigure}{.4\textwidth}
\includegraphics[width=\textwidth]{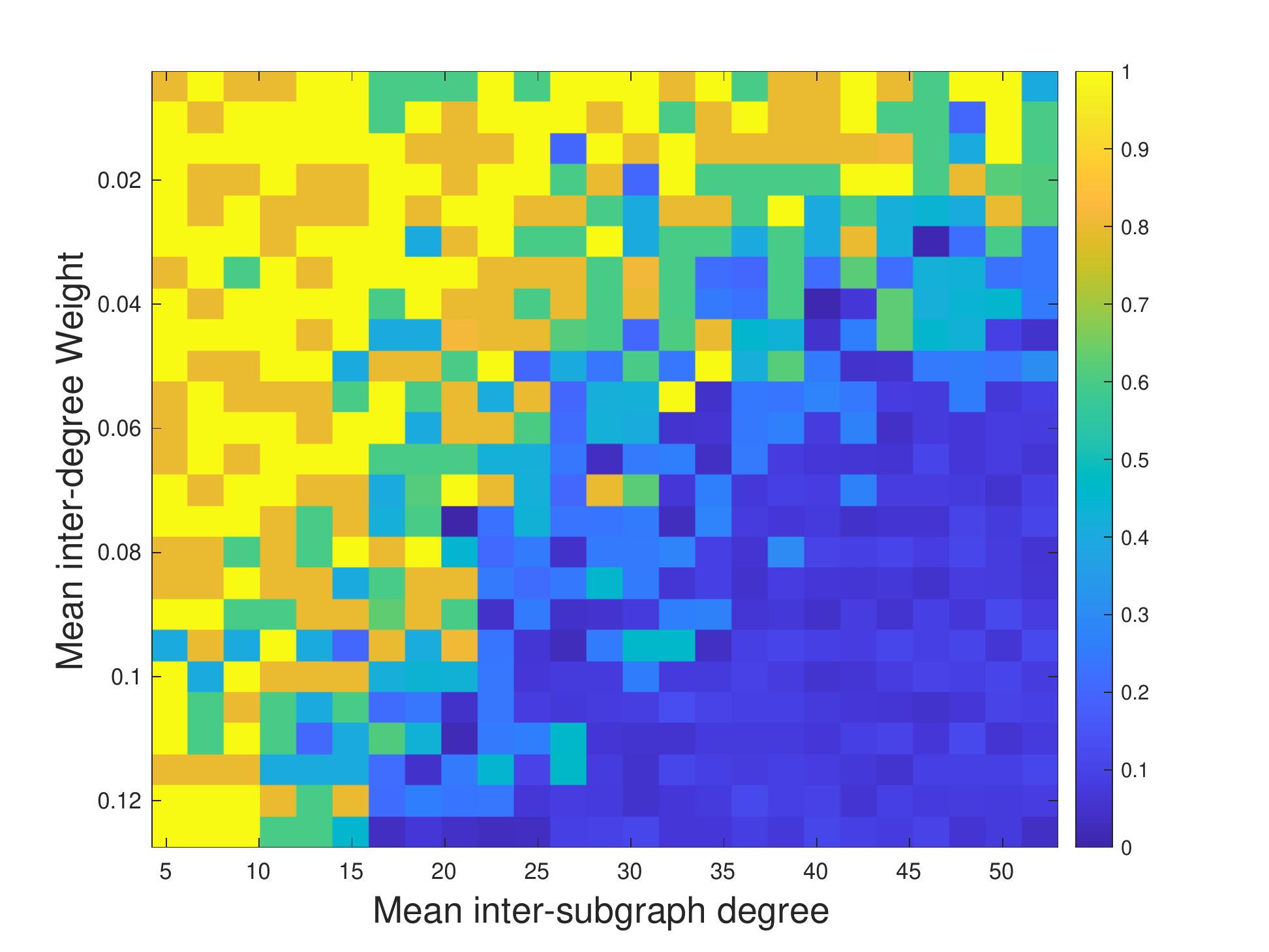}
\caption{Average of 5 runs}
\end{subfigure}
\begin{subfigure}{.4\textwidth}
\includegraphics[width=\textwidth]{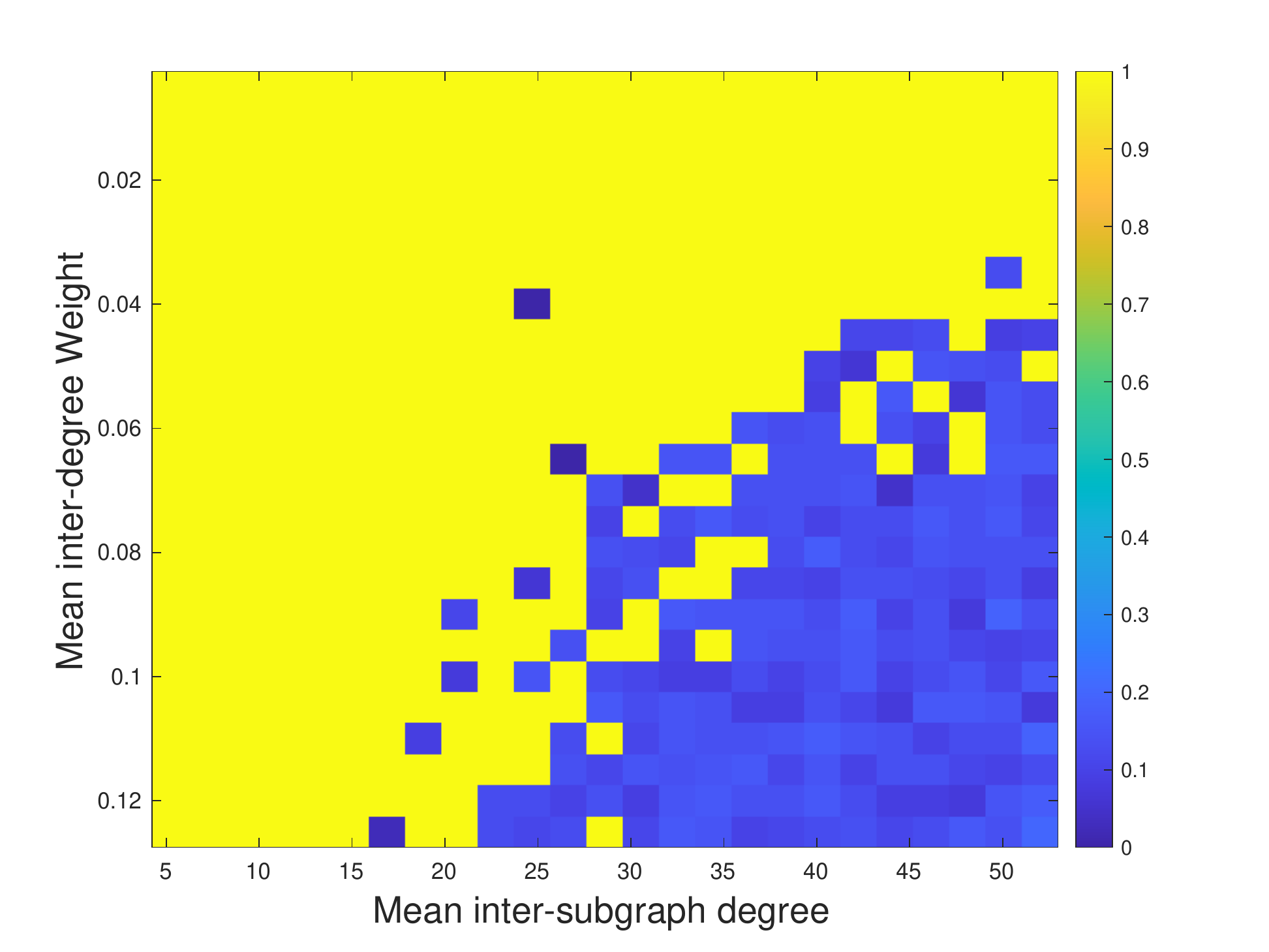}
\caption{Best of 5 runs}
\end{subfigure}
\caption{Accuracy of~\Cref{alg:subgraph} as a function of mean inter-subgraph degree (the mean taken over the nodes of the target subgraph) and mean weight of the inter-component edges (not including non-edges) for $3$-MICKEE graphs with planted subgraphs of sizes $80$, $160$, and $240$ nodes, with a total of $1,000$ nodes in the entire graph. The expected in-subgraph-degree is fixed at $20.8$ (with intra-component edge weights given by $1$). Inter-group edge weights are drawn from a normal distribution with maximum ranging from $.01$-$.25$. As long as the noise level is not too high, the subgraph detector finds the smallest planted subgraph despite the presence of ``decoy'' subgraphs at larger scales. This may be contrasted with spectral clustering, which is attracted to the larger scales.}
\label{fig:3earnonlocal_sg}
\end{figure}

\subsubsection*{Range of scales.} %In~\cref{fig:2earnonlocal_sg}, 
We generated $2$-MICKEE graphs with varying sizes of the subgraphs relative to each other and the total mass.  We take $1500<N<2500$ for the total size and vary the percentage of smallest planted subgraph as $.02N\leq N_1 \leq .15N$ with $N_2 = 2 N_1$.  Here, the inter-edge density was set to $.01$ (in-subgraph-degree values between $(1-3p)*N*.01$ for $.02<p<.15$) with mean inter-edge weight $.05$ compared to intra-group edge weights of $1$.  We used this framework to assess the detectability limits of sizes of the smallest components, and numerically we observe that small communities are quite detectable using our algorithm.  Using the best result over $5$ initializations, we were able to detect the smallest ear over the entire range and we did so reliably on average as well. Since the resulting figure would thus not be terribly informative for this range, we forego including a similar heat plot over this range of parameters.   
% \ZB{Max weight is edge\_weight\_max.}

%\begin{figure}
%\centering
%\includegraphics[width=0.4\textwidth]{2ear_paperfig_N_percentsmallestear_sg_ave-eps-converted-to.pdf}
%\includegraphics[width=0.4\textwidth]{2ear_paperfig_N_percentsmallestear_sg_max_alt-eps-converted-to.pdf}
%\caption{Accuracy of~\cref{alg:subgraph} on $2$-MICKEE graphs with the percentage of nodes in the smallest planted subgraph varying the size of the graph for detecting the smallest planted subgraph. \ZB{This is looking better--I am thinking the vertical axis is not so informative, and this figure might be more useful as a 1-D plot, where we fix the graph size (at the largest value feasible) and then just do a scatter plot of target size versus accuracy (mean and best-of-5 in different colors.) I just started this simulation using the fig-ear-size file I just wrote. It should run overnight and hopefully have some nice output in the morning in case others agree with me.}   }
%\label{fig:2earnonlocal_sg}
%\end{figure}

\subsubsection*{Heavy-tailed degree distributions.} For the results in \cref{fig:3earpowerlaw_sg}, we use a power law degree distribution in the largest component of $3$-MICKEE graphs with $N_1 = 80, N_2 = 160, N_3 = 240$ and $N = 1000$. Surprisingly (at least to us), smaller power-law exponents (corresponding to more skewed degree distributions) actually make the problem much easier (whereas adding noise edges had little effect). We conjecture that this is because, in the presence of very high-degree nodes, it is difficult to have a randomly occurring subgraph with high mean escape time, since connections into and out of the hubs are difficult to avoid.

\begin{figure}
\centering
\begin{subfigure}{.4\textwidth}
\includegraphics[width=\textwidth]{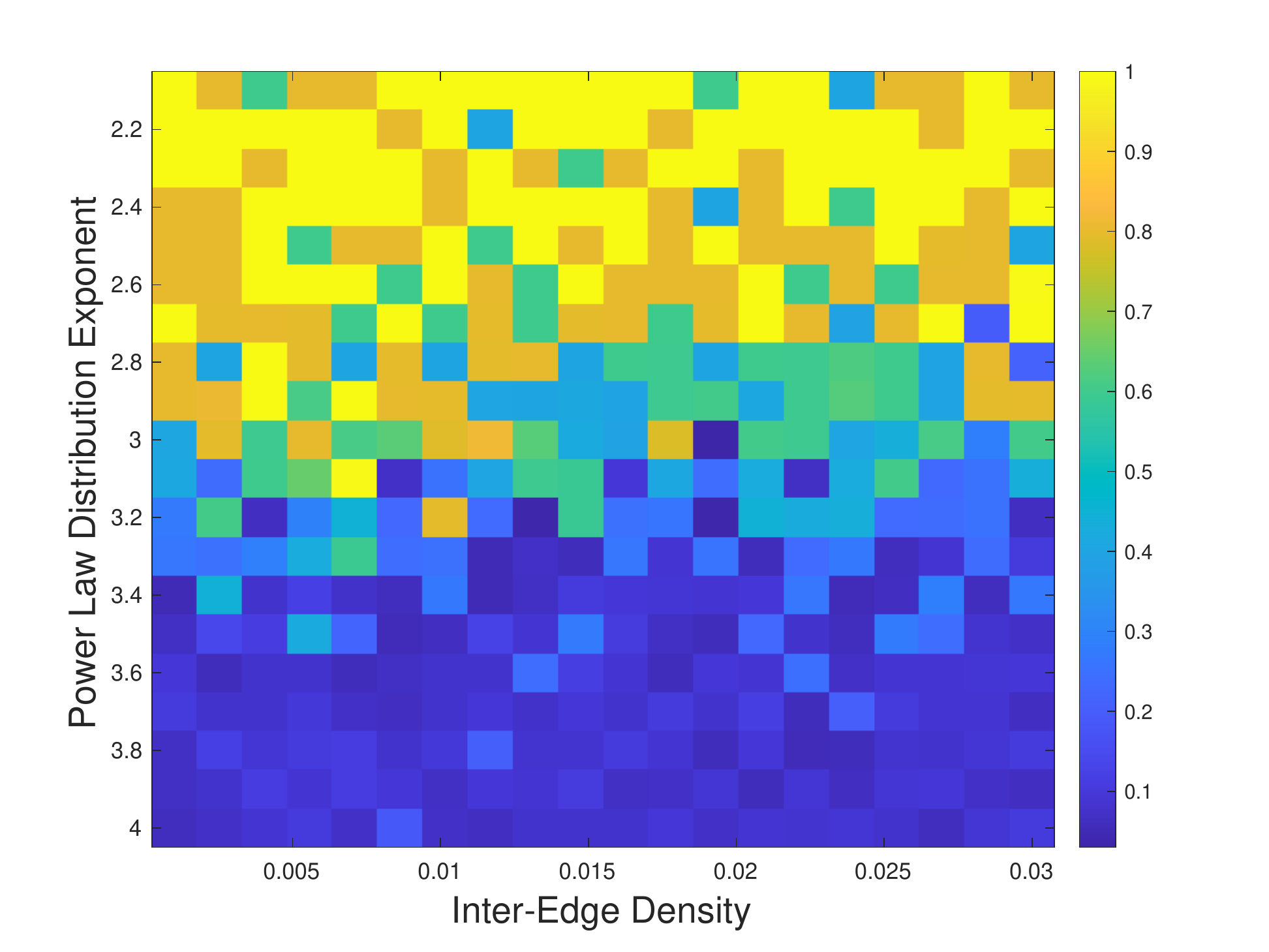}
\caption{Average of 5 runs}
\end{subfigure}
\begin{subfigure}{.4\textwidth}
\includegraphics[width=\textwidth]{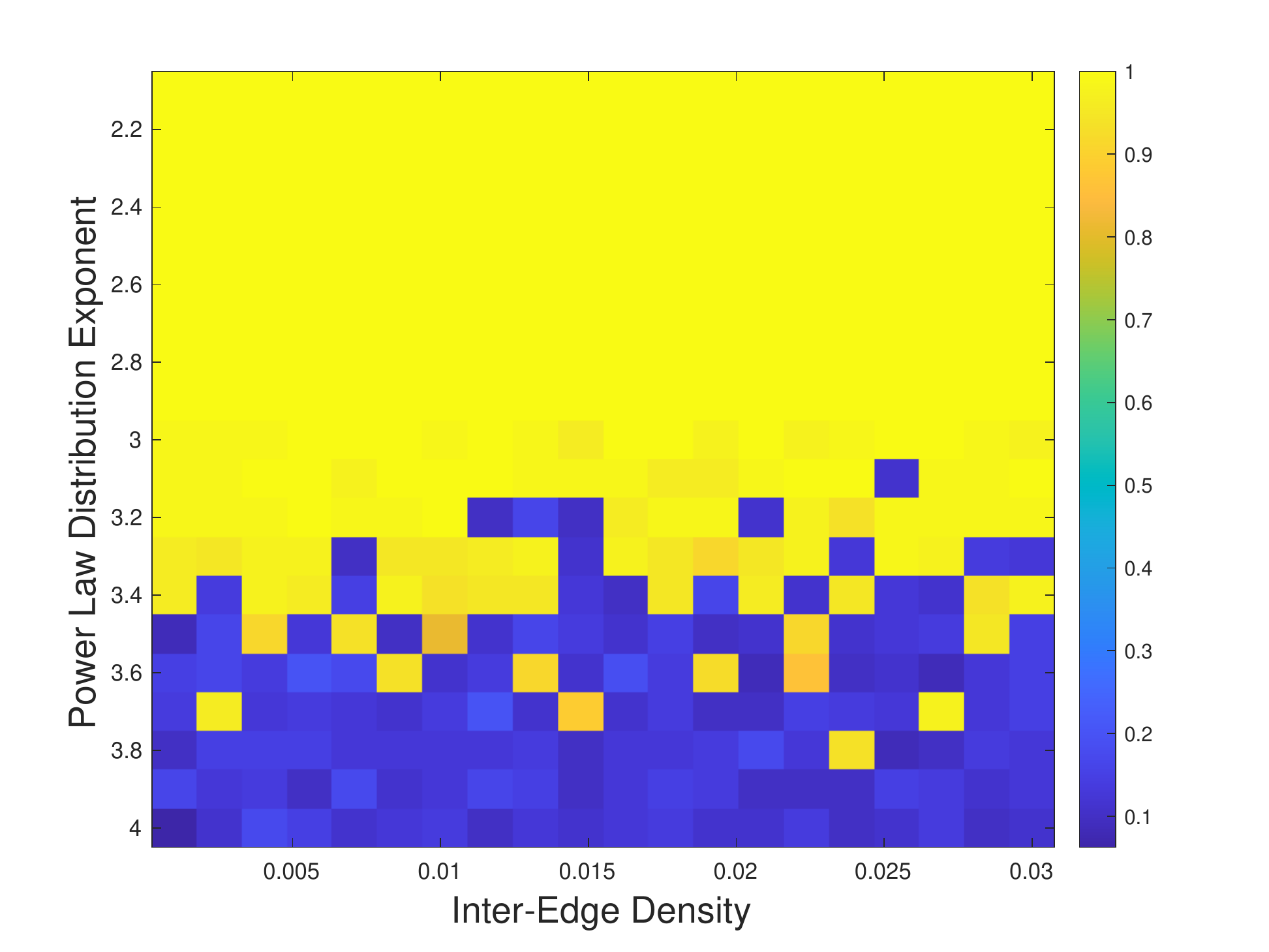}
\caption{Best of 5 runs}
\end{subfigure}
\caption{Accuracy of~\Cref{alg:subgraph} on a $3$-MICKEE graph with a power law distribution as a function of the power law exponent and inter-cluster edge density.  We observe a robustness to both the exponent and density (especially in the right panel) up to a sharp cutoff around 3.4. Note the low exponents (typically considered to be the harder cases) are actually easier in this problem.}
\label{fig:3earpowerlaw_sg}
\end{figure}

\subsubsection*{Directed edge utilization.} In~\cref{fig:ERcycle_sg} we consider the problem of detecting a directed cycle appended to an ER graph. The graph weights have been arranged so that the expected degree of all nodes is roughly equal. There are many edges leading from the ER graph into the cycle, with only one edge leading back into the ER graph. This makes the directed cycle a very salient dynamical feature, but not readily detectable by undirected (e.g.\ spectral) methods. We considered a large number of cycle sizes relative to the ER graph and with a proper choice of $\epsilon$, we were able to detect the cycle in all cases. Thus, this detector finds directed components very robustly due to the nature of the escape time.

\begin{figure}
\centering
\includegraphics[width=0.25\textwidth]{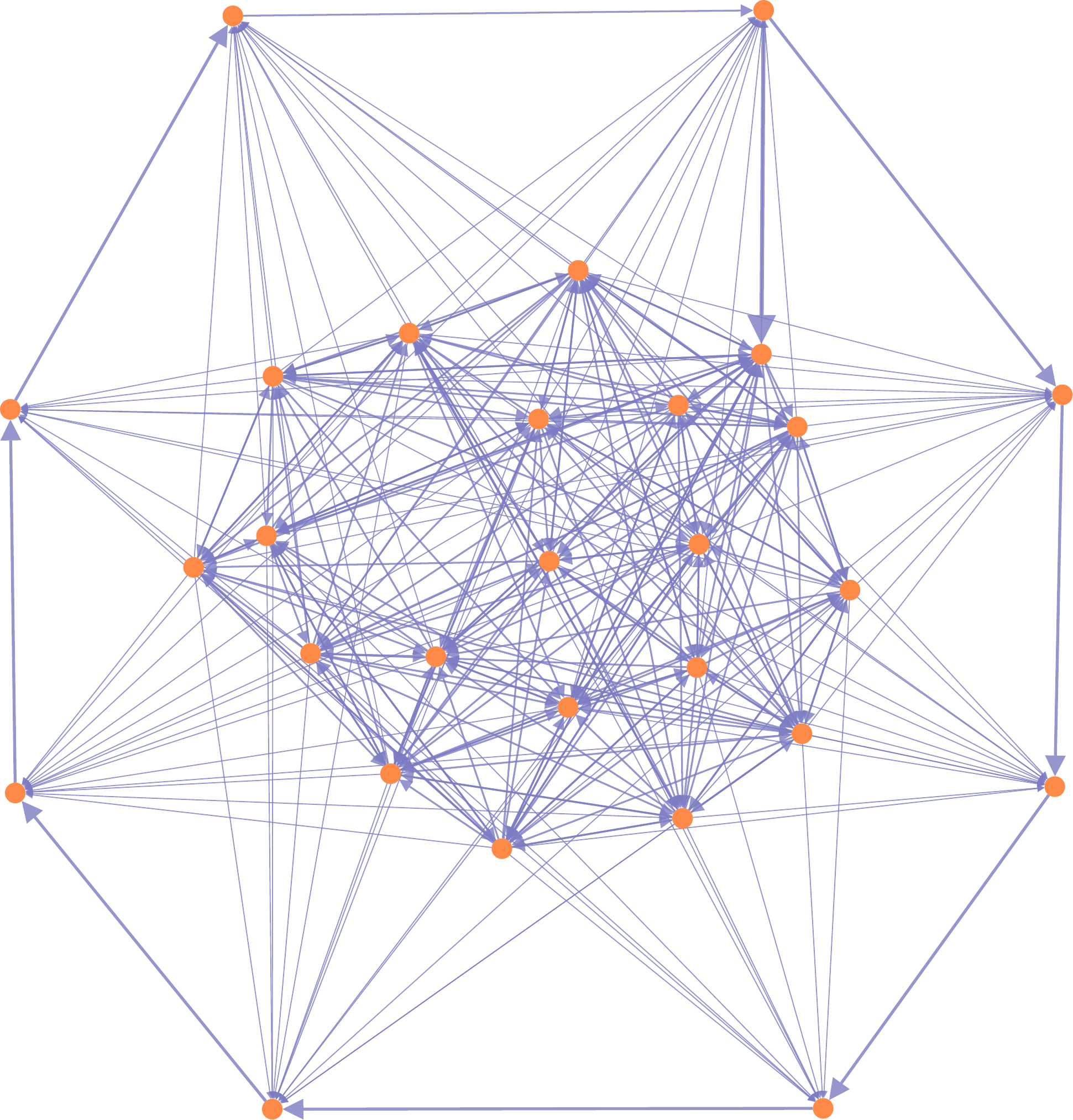}
\caption{A directed ER graph with a directed cycle appended. Note that there is only one edge (in the upper left) leading from the cycle to the ER graph, with many edges going the other direction from the ER graph to the cycle. The cycle nodes have the same expected degree as the ER nodes, yet a random walker would naturally get stuck in the cycle for a long time.  Detecting such a dynamical trap is a challenge for undirected algorithms, but~\Cref{alg:subgraph} detects it consistently over a wide range of cycle lengths and ER graph sizes.}
\label{fig:ERcycle_sg}
\end{figure}

\subsubsection*{Variation over choice of $N_1$.} In Figure \ref{fig:EK}, we consider how the Mean Exit Time as well as the regularized energy in \eqref{e:l1energy} behaves as we vary the constrained volume of our algorithm.  We considered a $2$-MICKEE graph with $N_1 = 50$, $N_2 = 100$ and $N = 1000$.  We took the baseline ER density $.03$ and the inter-edge density was set to $.025$ with mean inter-edge weight $.1$.

\begin{figure}
\centering
\begin{subfigure}{.4\textwidth}
\includegraphics[width=\textwidth]{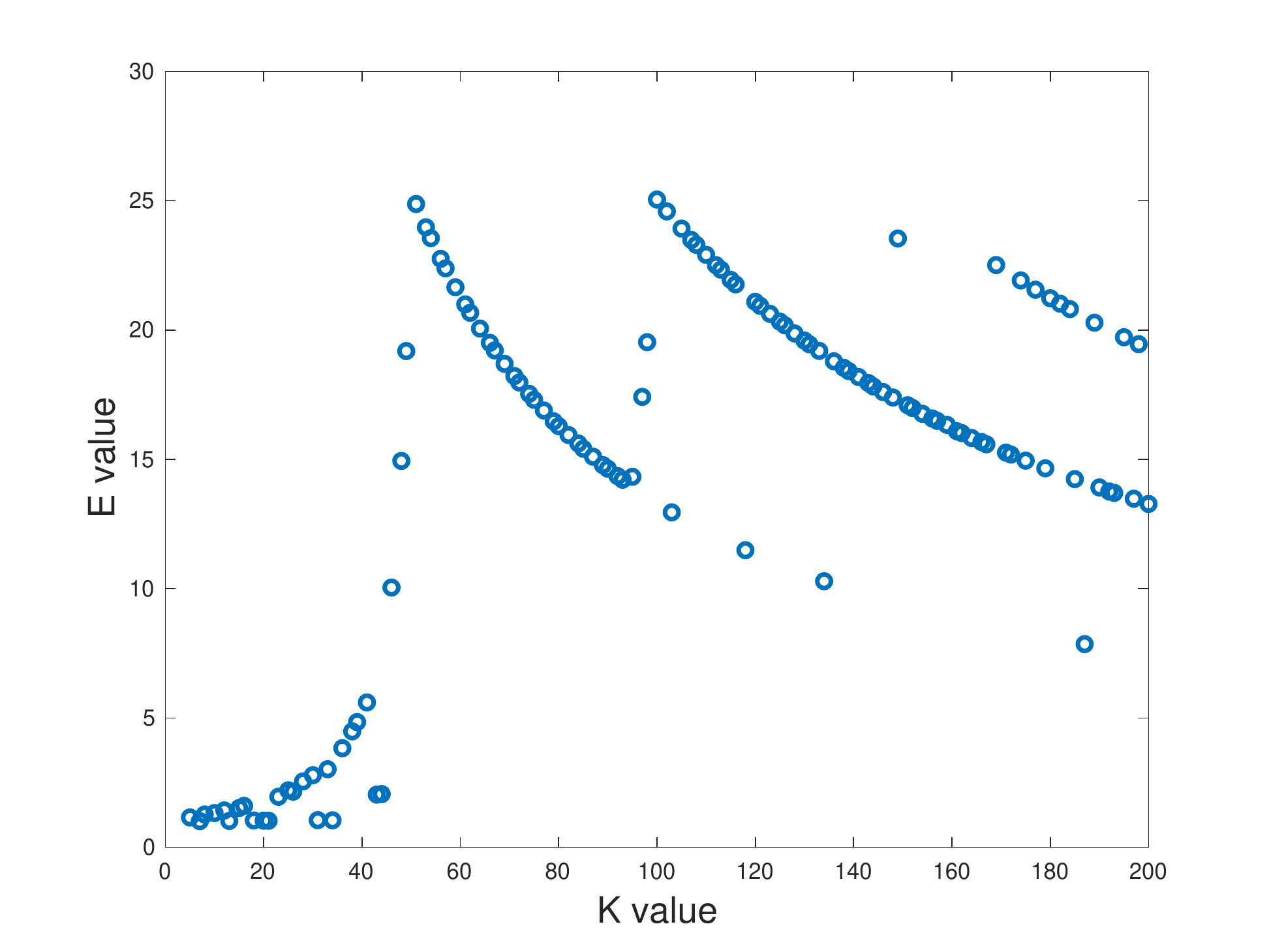}
\caption{True mean exit time}% of the subgraph}
\end{subfigure}
\begin{subfigure}{.4\textwidth}
\includegraphics[width=\textwidth]{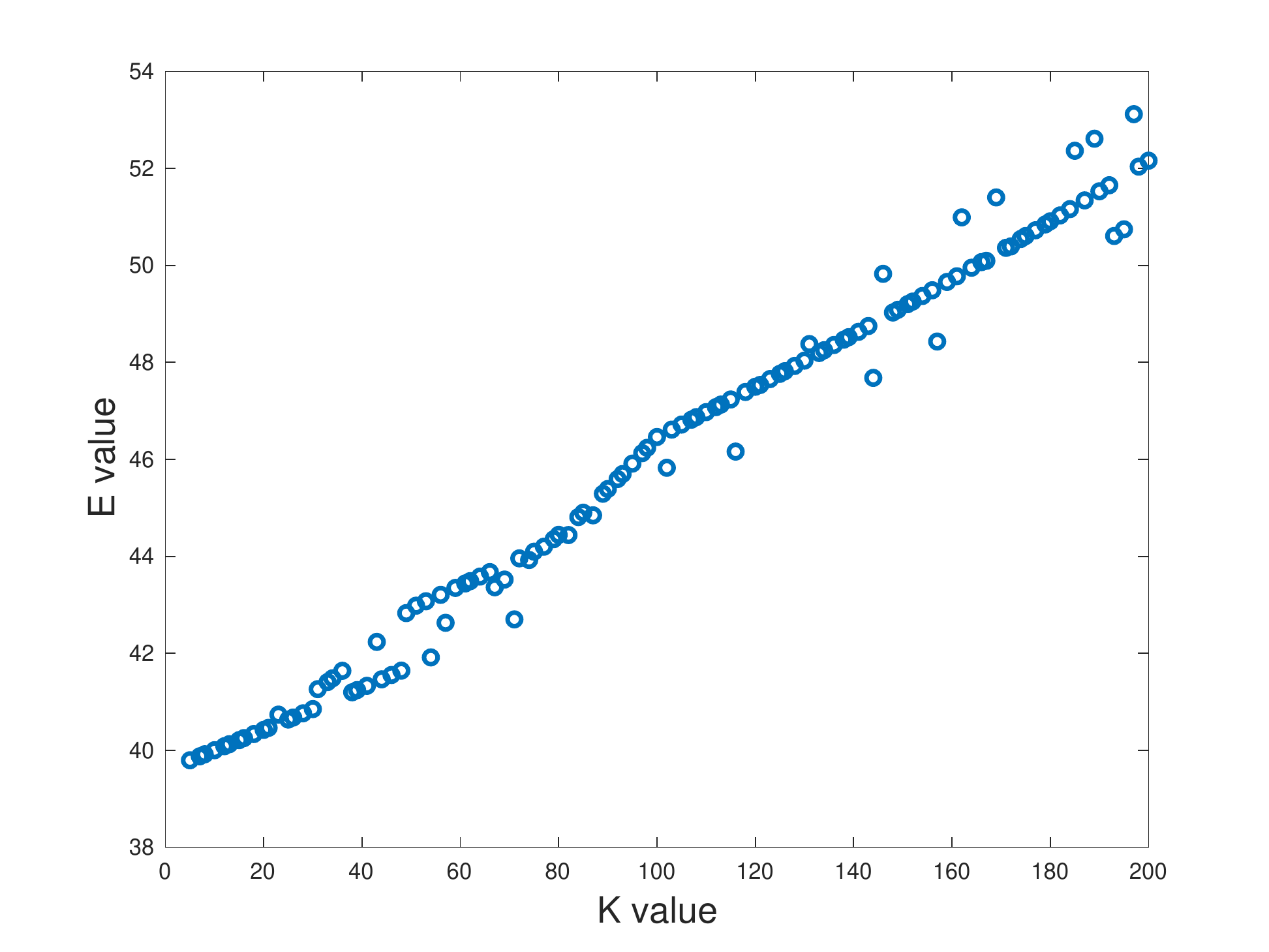}
\caption{Regularized energy}% of the subgraph characteristic function}
\end{subfigure}
\caption{The score of the optimal sub-graph found with~\cref{alg:subgraph}.  Both plots have clear shifts near $k = 50$ corresponding to the smallest component and $k = 100$ corresponding to the second smallest component. This suggests that the size of natural subgraphs within a given graph can be detected from breaks in the subgraph scores as the size of the target in~\cref{alg:subgraph} varies.}
\label{fig:EK}
\end{figure}

In summary, we find that the subgraph detector is able to robustly recover planted communities in synthetic graphs and is robust to a range of application-relevant factors.

\subsection{\texorpdfstring{$K$}{K}-partition method}

We will now consider the performance of \cref{alg:partitioner} in a variety of settings.  Throughout, we will give heat plots over the variation of the parameters to visualize the purity measure of our detected communities from our ground-truth smallest component of the graph, over $5$ iterations of the algorithm.  The purity measure is 
\[
\frac{1}{N} \sum_{k=1}^K \max_{1 \leq l \leq K} N_k^l
\]
for $N_k^l$ the number of data samples in cluster $k$ that are in ground truth class $l$.  

In \cref{fig:3earnonlocal} we consider a $\rho-\Delta$ heat plot of the purity measure for a 4-partition of a $3$-MICKEE graph using delocalized connections with $N_1 = 80, N_2 = 160, N_3 = 240$ and $N = 1000$, varying the density of the inter-community edge connections ($0 < \rho < .1$) and the mean weight of the inter-component edges ($0<\Delta<.125$). We vary over number and strength of connecting edges between components and consider the purity measure as output.

\begin{figure}
\centering
\begin{subfigure}{.4\textwidth}
\includegraphics[width=\textwidth]{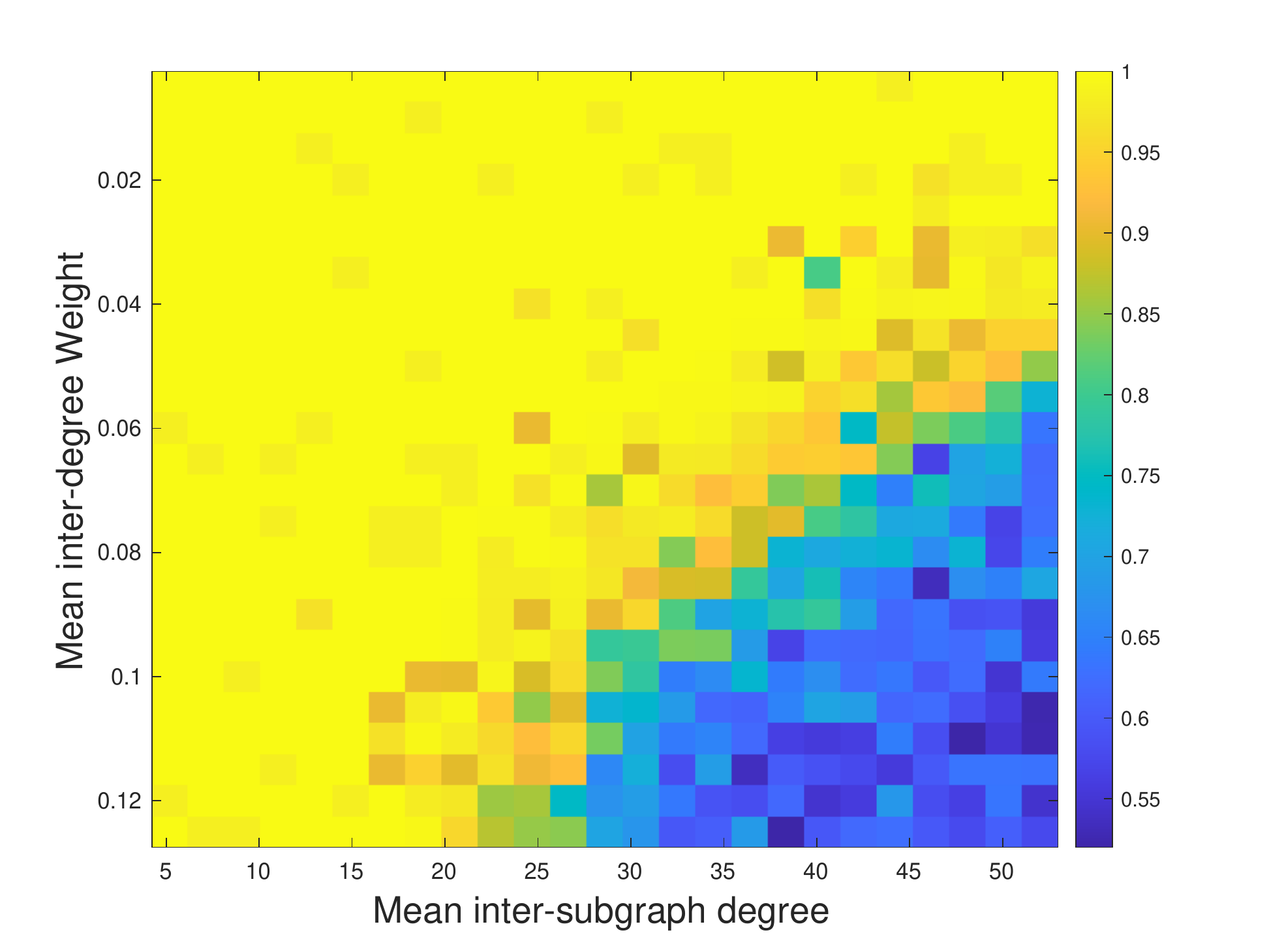}
\caption{Average of 5 runs}
\end{subfigure}
\begin{subfigure}{.4\textwidth}
\includegraphics[width=\textwidth]{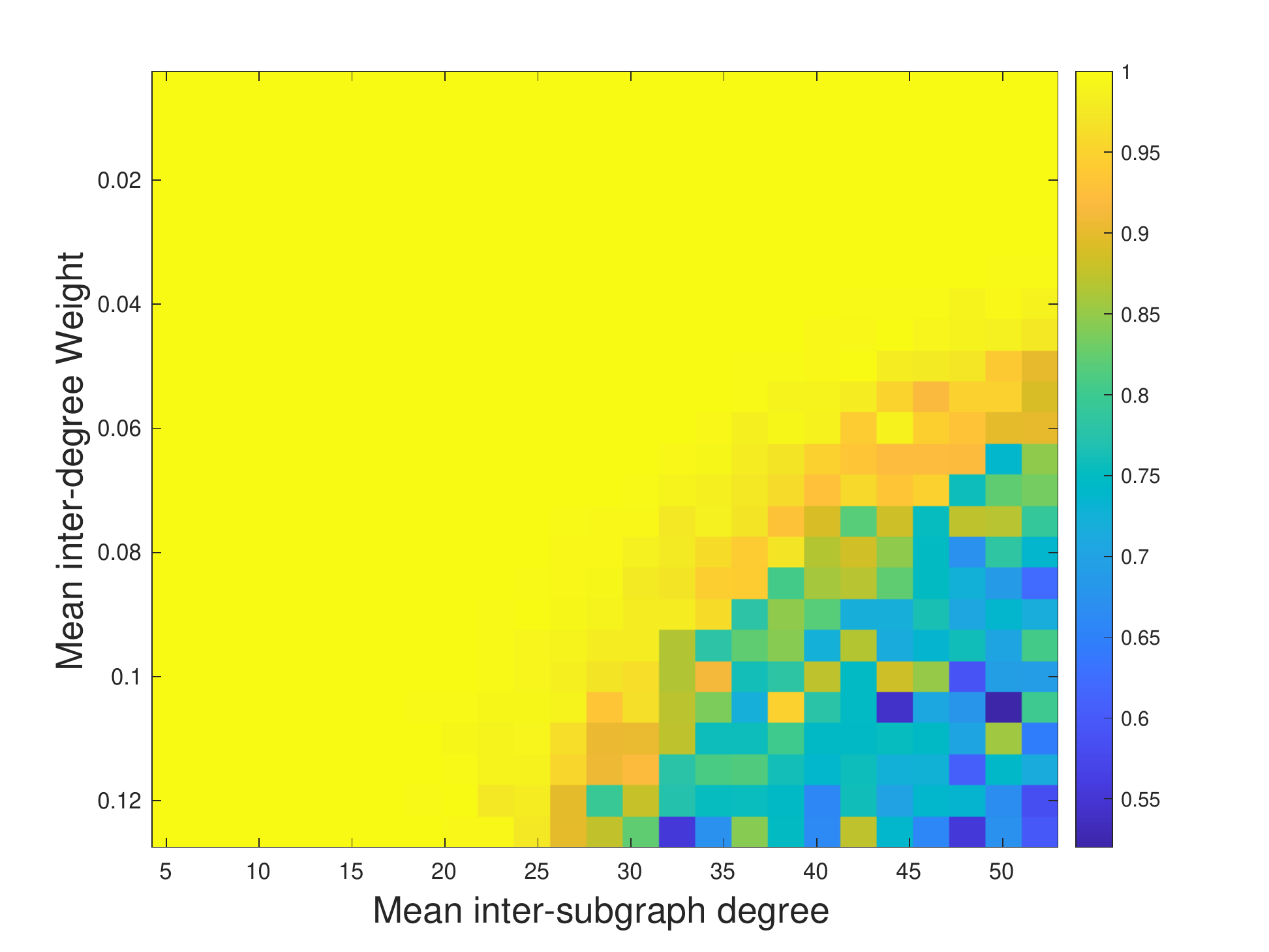}
\caption{Best of 5 runs}
\end{subfigure}
\caption{The purity measure for~\cref{alg:partitioner} on $3$-MICKEE graphs.  We vary the density of the inter-region edges and their edge weights. We observe robust (usually perfect) detection over a range of these parameters, with a sharp cutoff (especially in the left panel) when the noise levels grow too high, suggesting that detection is still possible beyond this cutoff, but the energy landscape has more bad local optima beyond this point.}
\label{fig:3earnonlocal}
\end{figure}

In addition, we have tested \cref{alg:partitioner} on MICKEE graphs with varying sizes of the components relative to each other and the total mass where the connections between ER graphs include more random edges with weak connection weights.  \Cref{fig:2earnonlocal} shows results from testing the algorithm on $2$-MICKEE graphs with varying sizes of the components relative to each other and the total mass.  We take $1500<N<2500$ for the total size and vary the percentage of smallest planted subgraph as $.02N\leq N_1 \leq .15N$ with $N_2 = 2 N_1$.  Here, the inter-edge density was set to $.025$ with mean inter-edge weight $.05$. The question addressed in this experiment is how small can we get the components and still detect them. We heat map the average purity measure varying the number of vertices in the graph and the relative size of the smallest sub-graph (i.e., $N_1/N$). 
%on a heat plot on a plot with axes given by \% of total graph for smallest planted subgraph and the total size of the graph $N$ for out edge weight distributed where perhaps we fix density $\rho = \text{\# of edges}/N$ or scale it up with number of node.

\begin{figure}
\centering
\begin{subfigure}{.4\textwidth}
\includegraphics[width=\textwidth]{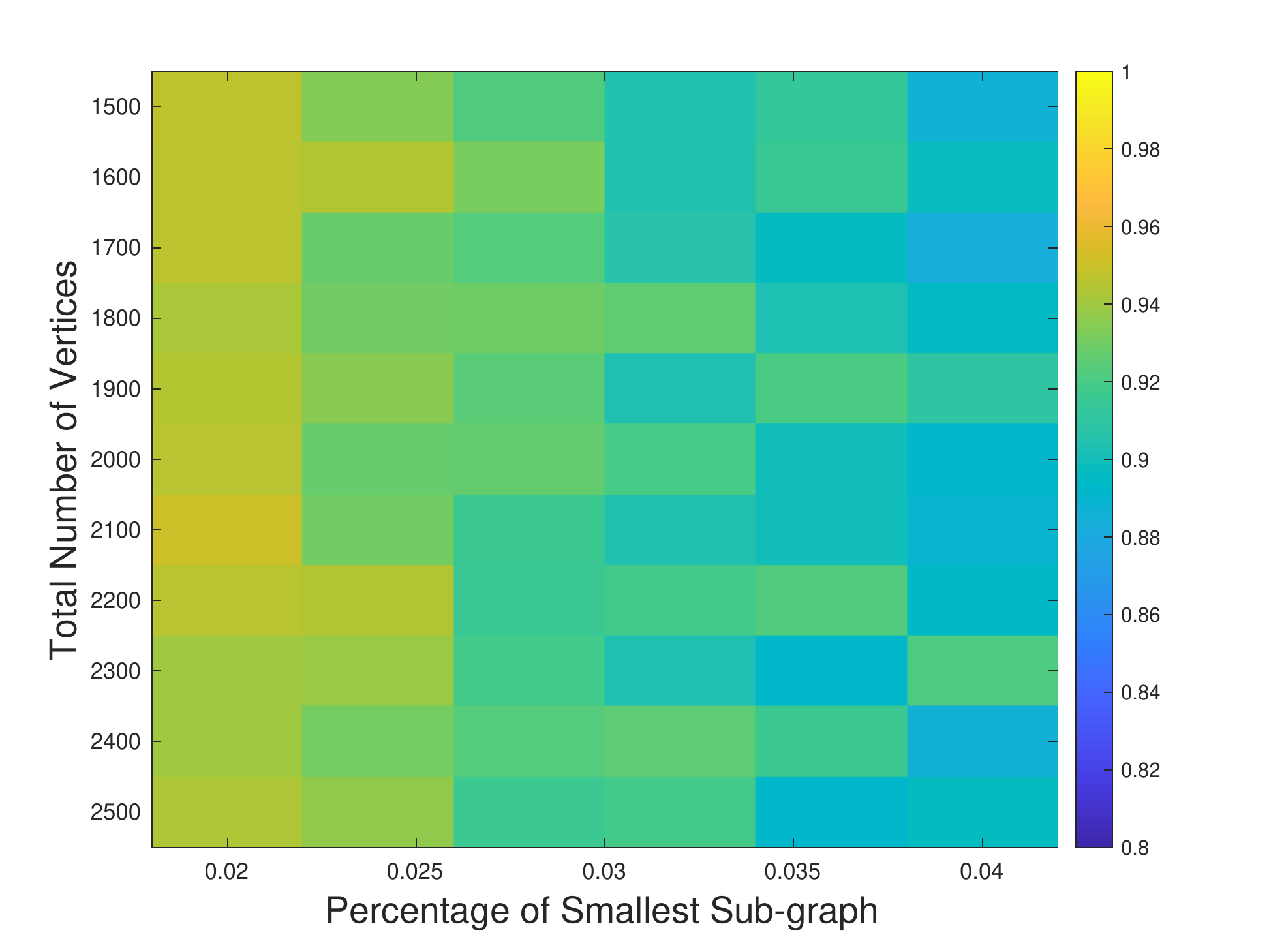}
\caption{Average of 5 runs}
\end{subfigure}
\begin{subfigure}{.4\textwidth}
\includegraphics[width=\textwidth]{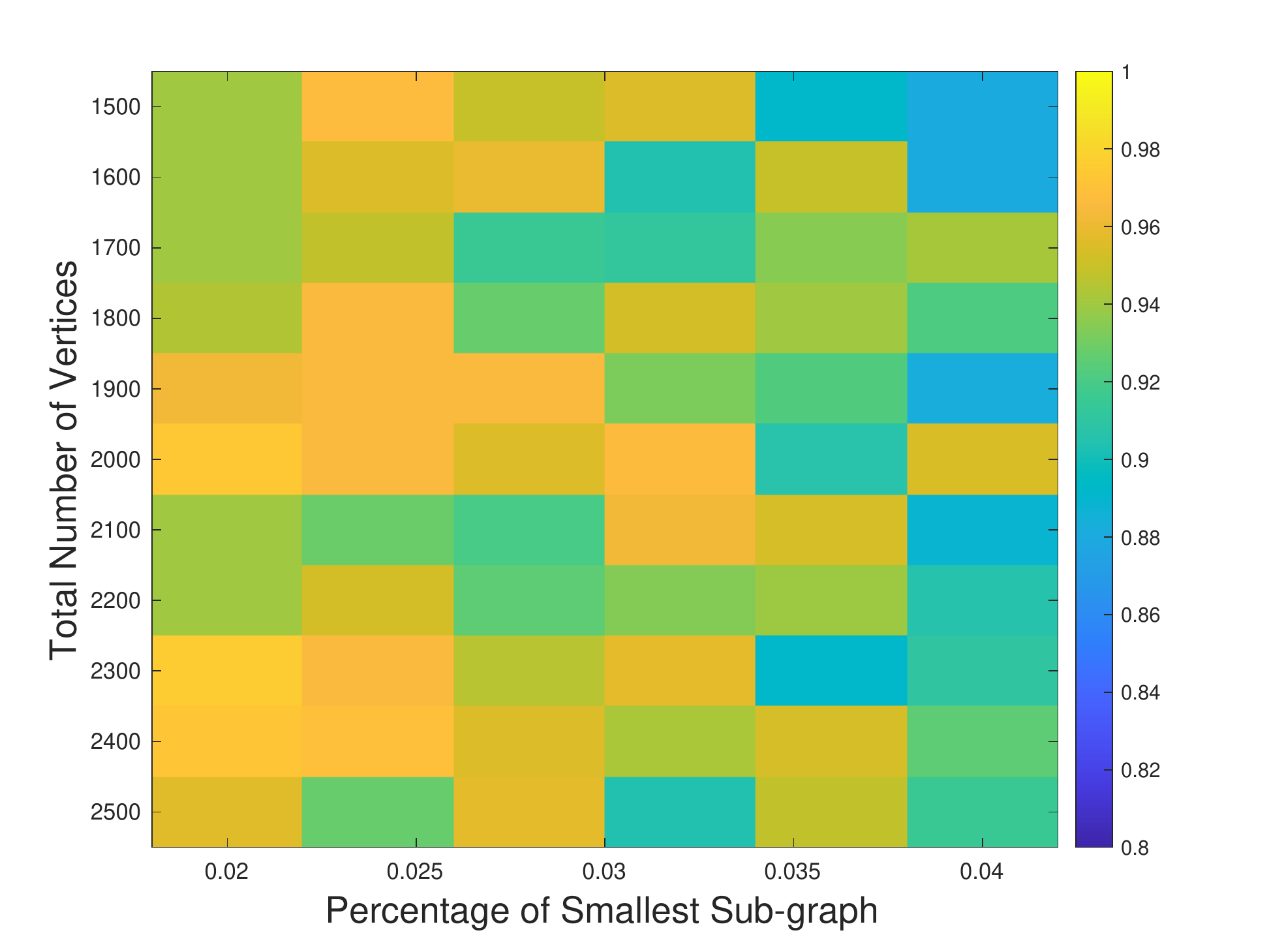}
\caption{Best of 5 runs}
\end{subfigure}
\caption{The purity measure for the partitioner acting on a $2$-MICKEE graph with the fraction of nodes in the smaller planted subgraph varying, along with the size of the graph. We observe a generally robust partitioning.}
\label{fig:2earnonlocal}
\end{figure}

We similarly consider the partitioning problem on a version of the $3$-MICKEE graph with power-law degree distribution in the largest component, using delocalized connections with $N_1 = 80, N_2 = 160, N_3 = 240$ and $N = 1000$. \Cref{fig:3earpowerlaw} provides a $\rho-q$ plot for results from varying the density ($.001<\rho<.03$) of the edge-density of connections between the components of the graph, using a power law degree distribution for the largest component with exponent ($2.1 \leq q \leq 4$). 

\begin{figure}
\centering
\begin{subfigure}{.4\textwidth}
\includegraphics[width=\textwidth]{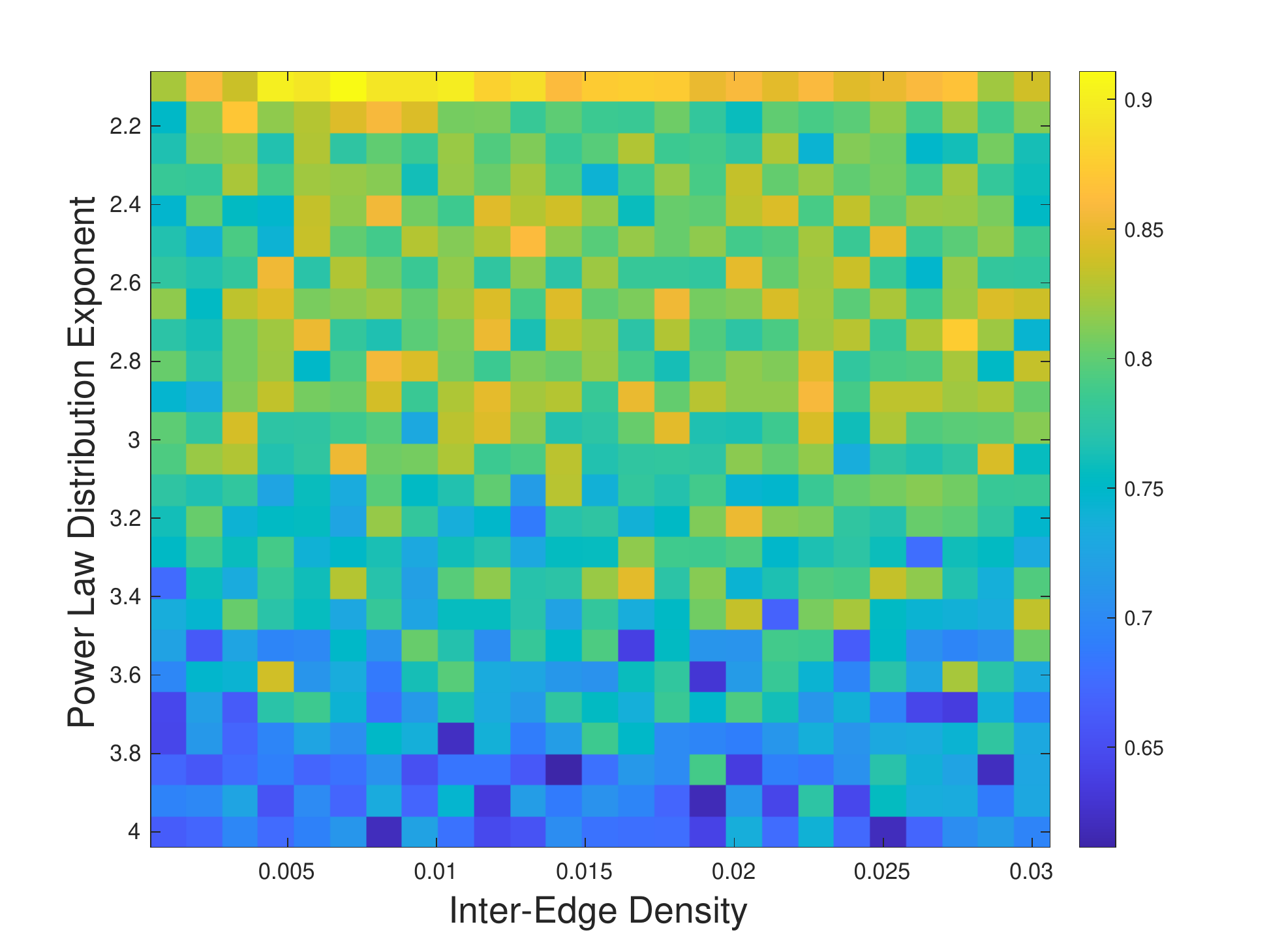}
\caption{Average of 5 runs}
\end{subfigure}
\begin{subfigure}{.4\textwidth}
\includegraphics[width=\textwidth]{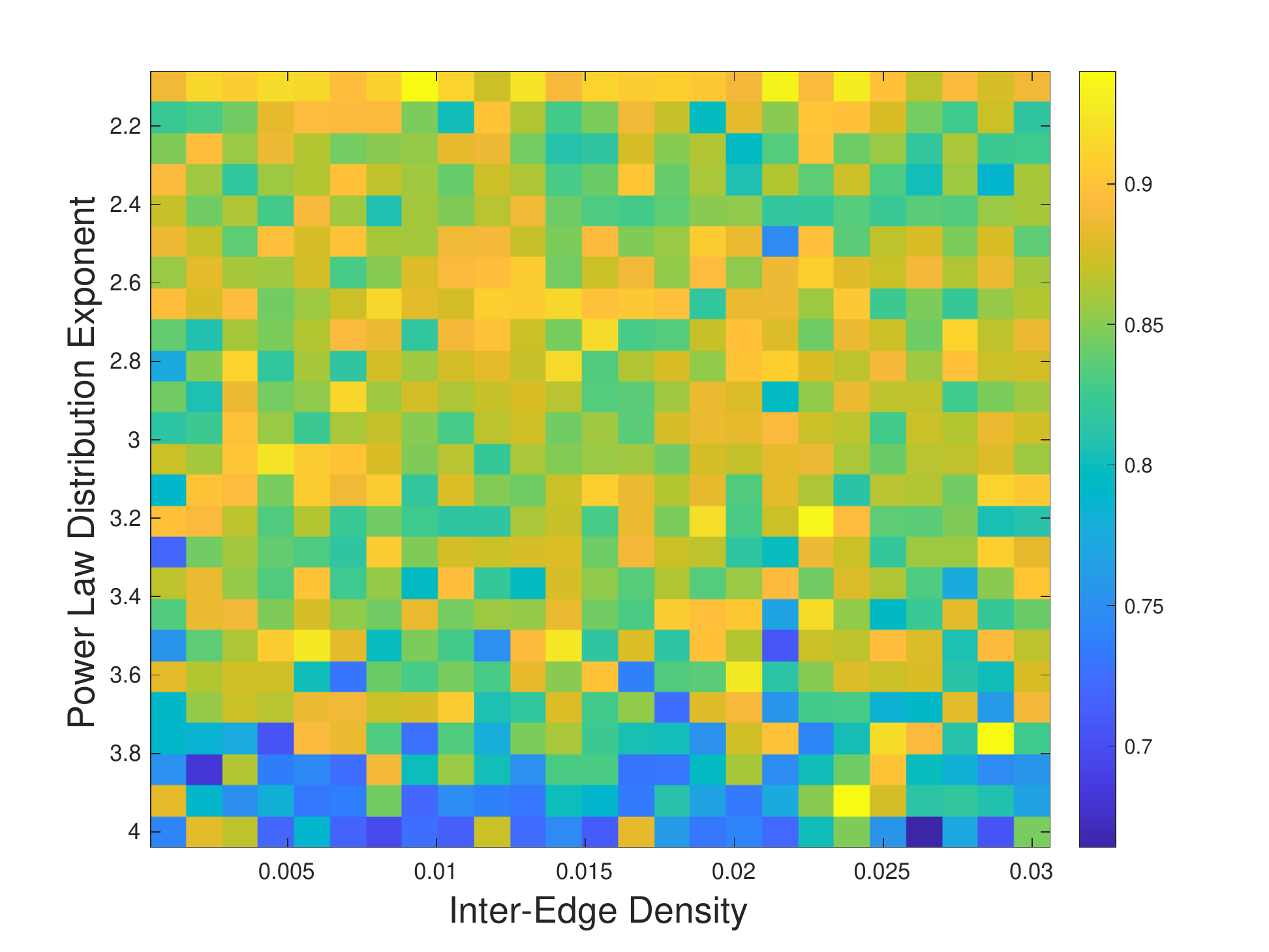}
\caption{Best of 5 runs}
\end{subfigure}
\caption{Purity achieved by~\cref{alg:partitioner} on $3$-MICKEE graphs with a power law degree distribution, varying the exponent of the power law and inter-subgraph edge density. We observe generally robust partitioning (especially in the right panel).}
\label{fig:3earpowerlaw}
\end{figure}

\subsubsection{Graph clustering examples} 

We consider the family of examples as in \cite{yang2012clustering} and compare the best presented purity measures from that paper to a number of settings using our algorithms.  Since some of these examples are by their nature actually directed data sets, we throughout computed both the  directed and undirected adjacency matrix representations as appropriate to test against.  We ran the $K$-partitioner over a variety of scenarios for both cases.  In all these runs, we chose the value of $K$ to agree with the metadata (we avoid term "
ground truth", as the node labels themselves may be noisy or not the only good interpretation of the data). However, we note that our algorithm also does a good job in a variety of settings selecting the number of partitions to fill without precisely providing this correct number \emph{a priori}.

For our study, we consider a number of various options for the algorithm.  First, the initial seeding sets were chosen either uniformly at random or using $K$-means on the first $K$-eigenvectors of the graph Laplacian.  We consider the best result over $10$ outcomes.  In addition, we considered a range of values of $\epsilon$, all of which were a multiplicative factor of the inverse of the Frobenius norm of the graph Laplacian, denoted $\| L \|_{{\rm Fro}}$, which sets a natural scaling for separation in the underlying graph.  See for instance the related choice in \cite{osting2014minimal}.  We computed a family of partitions for $\epsilon = {50 \nu}/{\| L \|_{{\rm Fro}}}$, where $\nu = e^{.2 \ell}$ with $-50 < \ell < 50$.  Finally, we also considered the impact of semi-supervised learning by toggling between $\lambda = 0$ and $\lambda = 10^6$ in \cref{eqn:parEssl} with $10$\% of the nodes being included in the learning set.  Clearly, there are many ways we might improve the outcomes, by for instance increasing the number and method of initialization and  refining our choices of $\epsilon$ or $\lambda$; nevertheless, we see under our current choices that our fast algorithm performs well over a range of such parameters, as reported in Table \ref{tab:purmeas}.

For each data set in Table \ref{tab:purmeas}, we report the best outcome using directed adjacency matrices to build the Graph Laplacian using both the $K$-means and random initializations but with no semi-supervised learning (Directed); the best outcome using symmetrized adjacency matrices to build the Graph Laplacian using both the $K$-means and random initializations but with no semi-supervised learning (Undirected); the best outcome when Semi-Supervised Learning is turned on over any configuration (Semi-supervision), the $K$-means only outcome ($K$-means only) and the best data from all the experiments reported in \cite{yang2012clustering} (Best from \cite{yang2012clustering}). Our results promisingly demonstrate that our fast algorithm is very successful in many cases in discovering large amounts of community structure that agrees with the metadata in these explicit data sets.  Given that our communities are all built around random walks in the graph, it is not clear that all ground-truth designated communities would align well with our methods. For example, we note that our results do not align well with the metadata in the {\rm POLBLOGS} data set.  A major takeaway from the table, however, is that in several examples we see that using the directed nature of the data provides better agreement with the metadata (as indicated by the green cells).  Perhaps most striking in the table is that the best run of our fast algorithm, even without semi-supervised learning, provides better agreement with the metadata than \cite{yang2012clustering} for many of the data sets.
%the {\rm SEISMIC}, {\rm AMLALL}, {\rm GISETTE}, {\rm ECOLI}, {\rm IRIS}, and {\rm PENDIGITS} data, sets all of which do comparable or better to best known results .

As a statistical summary of our findings, we had in total 
$39$ directed datasets and $6$ undirected data sets that came from a variety of domains (image, social, biological, physical, etc.).
The networks are sized between $35$ nodes and $98,528$ nodes, having $2$-$65$ classes per network.  Among directed networks, 
$21$ data sets gave highest purity with the metadata with semi-supervised learning turned on, while $13$ have the best result from \cite{yang2012clustering}, and $3$ have $K$-means only best.  For $9$ total data sets (green in the table), the directed version of our algorithm is better than the symmetrized undirected version, while $5$ are tied (yellow) and for $25$ the undirected method is better (orange).  When \cite{yang2012clustering} is best, the median gap from our result with semi-supervised learning is $.05$.
When our algorithm with semi-supervised learning is best, the median gap from \cite{yang2012clustering} is $.05$.  There is no clear relationship between data domain and performance or node count and performance.  However, semi-supervision generally did improve the results the most with a smaller class count (median $3$) versus \cite{yang2012clustering} (median $20$).

When the directed algorithm is better than the undirected version, the median gap is $0.03$. Interestingly, $5$ of the datasets where directed was better were image or sensor data, with the two largest gaps ($.07$ and $.11$) being digit datasets.
When undirected was better, the median gap was $0.06$, with the largest gap being $.29$, for the 20NEWS dataset.  When semi-supervision improves over our method (max of directed and undirected performance), the median improvement is $.06$, and the max improvements were $.22$ and $.20$.  There is no obvious relationship between edge density and algorithm performance.

\begin{footnotesize}

\begin{table}
\centering

% Fix for column widths
% https://tex.stackexchange.com/questions/334269/using-dcolumn-siunitx-for-aligning-numbers-with-uncertainty-and-boldface/334323#334323
\sisetup{detect-weight,mode=text}
% for avoiding siunitx using bold extended
\renewrobustcmd{\bfseries}{\fontseries{b}\selectfont}

\begin{tabular}{llr>{\raggedleft}p{.3in}p{.3in}p{.3in}p{.3in}p{.3in}p{.3in}p{.3in}}
%\toprule
%Network & Size & Classes & Directed & Undirected & SSL  & $K$-means & Best from \cite{yang2012clustering} \\ 
\rot{Network} & \rot{Domain} & \rot{Vertices} & \rot{Density} & \rot{Classes} & \rot{Directed} & \rot{Undirected} & \rot{Semi-supervision}  & \rot{$K$-means only} & \rot{Best from \cite{yang2012clustering}} \\ 
%\multirow{2}{*}{Network} & \multirow{2}{*}{Size} & \multirow{2}{*}{Classes}&\multicolumn{2}{c}{Directed}& \multirow{2}{*}{SSL} & \multirow{2}{*}{$K$-means} & \multirow{2}{*}{Best from \cite{yang2012clustering}}\\ \cline{4-5}
%&&&Yes&No&&\\
\midrule
\multicolumn{8}{l}{\bf Directed data}\\
MNIST       & Digit     & 70,000    & 0.00 & 10& \colorbox{green}{0.85}  & \colorbox{green}{0.78}          & \textbf{0.98}                       & 0.84          & 0.97\\ 
VOWEL       & Audio     & 990       & 0.01 & 11& \colorbox{green}{0.35}  & \colorbox{green}{0.32}          & \textbf{0.44}                       & 0.34          & 0.37 \\ 
FAULTS      & Materials & 1,941     & 0.00 & 7 & \colorbox{green}{0.44}  & \colorbox{green}{0.42}          & \textbf{0.49}                       & 0.39         & 0.41 \\ 
SEISMIC     & Sensor    & 98,528    & 0.00 & 3 & \colorbox{green}{0.60}  & \colorbox{green}{0.59}          & \textbf{0.66}                       & 0.58          & 0.59 \\ 
7Sectors    & Text      & 4,556     & 0.00 & 7 & \colorbox{green}{0.27}  & \colorbox{green}{0.26}          & \textbf{0.39}                       & 0.26          & 0.34 \\ 
PROTEIN     & Protein   & 17,766    & 0.00 & 3 & \colorbox{green}{0.47}  & \colorbox{green}{0.46}          & \textbf{0.51}                       & 0.46          & 0.50 \\ 
KHAN        & Gene      & 83        & 0.06 & 4 & \colorbox{yellow}{0.59} & \colorbox{yellow}{0.59}          & \textbf{0.61}                       & 0.59          & 0.60 \\ 
ROSETTA     & Gene      & 300       & 0.02 & 5 & \colorbox{yellow}{0.78} & \colorbox{yellow}{0.78}          & \textbf{0.81}                       & 0.77          & 0.77 \\ 
WDBC        & Medical   & 683       & 0.01 & 2 & \colorbox{yellow}{0.65} & \colorbox{yellow}{0.65}          & \textbf{0.70}                       & 0.65          & 0.65 \\ 
POLBLOGS    & Social    & 1,224     & 0.01 & 2 & \colorbox{yellow}{0.55} & \colorbox{yellow}{0.55}          & \textbf{0.59}                       & 0.51          & NA \\
CITESEER    & Citation  & 3,312     & 0.00 & 6 & \colorbox{orange}{0.28} & \colorbox{orange}{0.29}          & \textbf{0.49}                       & 0.25          & 0.44\\  
SPECT       & Astronomy & 267       & 0.02 & 3 & \colorbox{orange}{0.79} & \colorbox{orange}{0.80}          & \textbf{0.84}                       & 0.79          & 0.79 \\ 
DIABETES    & Medical   & 768       & 0.01 & 2 & \colorbox{orange}{0.65} & \colorbox{orange}{0.67}          & \textbf{0.74}                       & 0.65          & 0.65 \\ 
DUKE        & Medical   & 44        & 0.11 & 2 & \colorbox{orange}{0.64} & \colorbox{orange}{0.68}          & \textbf{0.73}                       & 0.52          & 0.70 \\ 
IRIS        & Biology   & 150       & 0.03 & 3 & \colorbox{orange}{0.87} & \colorbox{orange}{0.90}          & \textbf{0.97}                       & 0.67          & 0.93 \\ 
RCV1        & Text      & 9,625     & 0.00 & 4 & \colorbox{orange}{0.35} & \colorbox{orange}{0.40}          & \textbf{0.62}                       & 0.32          & 0.54 \\ 
CORA        & Citation  & 2,708     & 0.00 & 7 & \colorbox{orange}{0.33} & \colorbox{orange}{0.39}          & \textbf{0.50}                       & 0.32          & 0.47 \\ 
CURETGREY   & Image     & 5,612     & 0.00 & 61& \colorbox{orange}{0.23} & \colorbox{orange}{0.29}          & \textbf{0.33}                       & 0.22          & 0.28\\ 
SPAM        & Email     & 4,601     & 0.00 & 2 & \colorbox{orange}{0.64} & \colorbox{orange}{0.70}          & \textbf{0.73}                       & 0.61          & 0.69 \\ 
GISETTE     & Digit     & 7,000     & 0.00 & 2 & \colorbox{orange}{0.87} & \colorbox{orange}{0.94}          & \textbf{0.97}                       & 0.81          & 0.94 \\ 
WEBKB4      & Text      & 4,196     & 0.00 & 4 & \colorbox{orange}{0.42} & \colorbox{orange}{0.53}          & \textbf{0.66}                       & 0.40          & 0.63 \\ 
CANCER      & Medical   & 198       & 0.03 & 14& \colorbox{orange}{0.49} & \colorbox{orange}{\textbf{0.55}} & 0.54                                & 0.45          & 0.54 \\ 
YALEB       & Image     & 1,292     & 0.00 & 38& \colorbox{orange}{0.44} & \colorbox{orange}{\textbf{0.54}} & 0.52                                & 0.41          & 0.51 \\ 
COIL-20     & Image     & 1,440     & 0.00 & 20& \colorbox{orange}{0.74} & \colorbox{orange}{\textbf{0.85}}          & 0.78                                & 0.82 & 0.81 \\ 
ECOLI       & Protein   & 327       & 0.02 & 5 & \colorbox{orange}{0.79} & \colorbox{orange}{\textbf{0.83}} & 0.81                                & 0.81          & \textbf{0.83} \\ 
YEAST       & Biology   & 1,484     & 0.00 & 10& \colorbox{orange}{0.46} & \colorbox{orange}{0.53}          & 0.54                                & 0.47          & \textbf{0.55}  \\
20NEWS      & Text      & 19,938    & 0.00 & 20& \colorbox{orange}{0.20} & \colorbox{orange}{0.49}          & 0.62                                & 0.16          & \textbf{0.63} \\ 
MED         & Text      & 1,033     & 0.00 & 31& \colorbox{orange}{0.50} & \colorbox{orange}{0.54}          & 0.54                                & 0.48          & \textbf{0.56} \\ 
REUTERS     & Text      & 8,293     & 0.00 & 65& \colorbox{orange}{0.60} & \colorbox{orange}{0.69}          & 0.75                                & 0.60          & \textbf{0.77} \\ 
ALPHADIGS   & Digit     & 1,404     & 0.00 & 6 & \colorbox{orange}{0.42} & \colorbox{orange}{0.48}          & 0.48                                & 0.46          & \textbf{0.51} \\ 
ORL         & Face      & 400       & 0.01 & 40& \colorbox{orange}{0.76} & \colorbox{orange}{0.82}          & 0.76                                & 0.78          & \textbf{0.83} \\ 
OPTDIGIT    & Digit     & 5,620     & 0.00 & 10& \colorbox{orange}{0.90} & \colorbox{orange}{0.93}          & 0.91                                & 0.90          & \textbf{0.98} \\ 
PIE         & Face      & 1,166     & 0.00 & 53& \colorbox{orange}{0.53} & \colorbox{orange}{0.66}          & 0.62                                & 0.51          & \textbf{0.74}\\ 
SEG         & Image     & 2,310     & 0.00 & 7 & \colorbox{orange}{0.54} & \colorbox{orange}{0.64}          & 0.59                                & 0.51          & \textbf{0.73} \\ 
UMIST       & Face      & 575       & 0.01 & 20& \colorbox{green}{0.74}  & \colorbox{green}{0.71}          & 0.67                                & 0.67          & \textbf{0.74} \\ 
PENDIGITS   & Digit     & 10,992    & 0.00 & 10& \colorbox{green}{0.82}  & \colorbox{green}{0.73}          & 0.82                                & 0.83          & \textbf{0.87}\\ 
SEMEION     & Digit     & 1,593     & 0.00 & 10& \colorbox{green}{0.86}  & \colorbox{green}{0.82}          & 0.77                                & 0.81          & \textbf{0.94} \\ 
AMLALL      & Medical   & 38        & 0.13 & 2 & \colorbox{orange}{0.92} & \colorbox{orange}{\textbf{0.95}} & 0.94                                & \textbf{0.95} & 0.92 \\ 
IONOSPHERE  & Radar     & 351       & 0.01 & 2 & \colorbox{yellow}{0.77} & \colorbox{yellow}{0.77}          & \textbf{0.85}                       & \textbf{0.85} & 0.70 \\ 

\multicolumn{8}{l}{\bf Undirected data} \\
POLBOOKS & Social & 105 & 0.08 & 3  & 0.83 & \textbf{0.85} & \textbf{0.85} & 0.82 & 0.83 \\ 
KOREA & Social &  35 & 0.11 & 2 & \textbf{1.00} & \textbf{1.00} & \textbf{1.00} & 0.71 & \textbf{1.00} \\ 
FOOTBALL & Sports & 115 & 0.09 & 12 & \textbf{0.94} & 0.93 & 0.90 & 0.93 & 0.93 \\ 
MIREX & Music & 3,090 & 0.00 & 10 &  0.21 & 0.24 & 0.27 & 0.12 & \textbf{0.43} \\ 
%STRIKE & 24 & 3 & 0.13 &  1.00 & 1.00 & 1.00 & 0.96 & 1.00\\ 
HIGHSCHOOL & Social & 60 & 0.10 & 5 & 0.82 & 0.85 & 0.83 & 0.82 & \textbf{0.95} \\ 
%ADS & Ads & 2,359 & 0.99 & 2 & \textbf{0.84} & \textbf{0.84} & \textbf{0.84} & \textbf{0.84} & \textbf{0.84}   \\ 
\bottomrule
\end{tabular}
\caption{Purity Measure Table}
\label{tab:purmeas}
\end{table}
\end{footnotesize}

We have discussed the output of a variety of experiments on a large number of data sets, but we also want to discuss their dependence upon the $\epsilon$ parameter and the percentage of nodes that are learned in the energy \eqref{eqn:parEssl}.  To that end, we consider the output purity measure for some representative data sets and look at the outputs over a range of epsilon parameters and percentages of learning.  In this case, we considered only the $K$-means initialization for consistency and simplicity of comparison.  For the $\epsilon$ sweep, we recall that we considered the range $\epsilon = {50 \nu}/{\| L \|_{{\rm Fro}}}$, where $\nu = e^{.2 \ell}$ with $-50 < \ell < 50$. In \cref{fig:epsweep} we show the variation in the purity measure with $\epsilon$ for a small graph ({\rm FOOTBALL}), a medium sized graph ({\rm OPTDIGITS)}, and a large graph ({\rm SEISMIC}). Similarly, in \cref{fig:persweep} we visualize how results vary with the fraction of supervision (nodes with labels provided) under semi-supervised learning, for the same graphs, with $\nu = .6,.8,1.0,1.2,1.4,1.6,1.8$.

\begin{figure}[t!]
\centering
\begin{subfigure}{.25\textwidth}
\includegraphics[width=\textwidth]{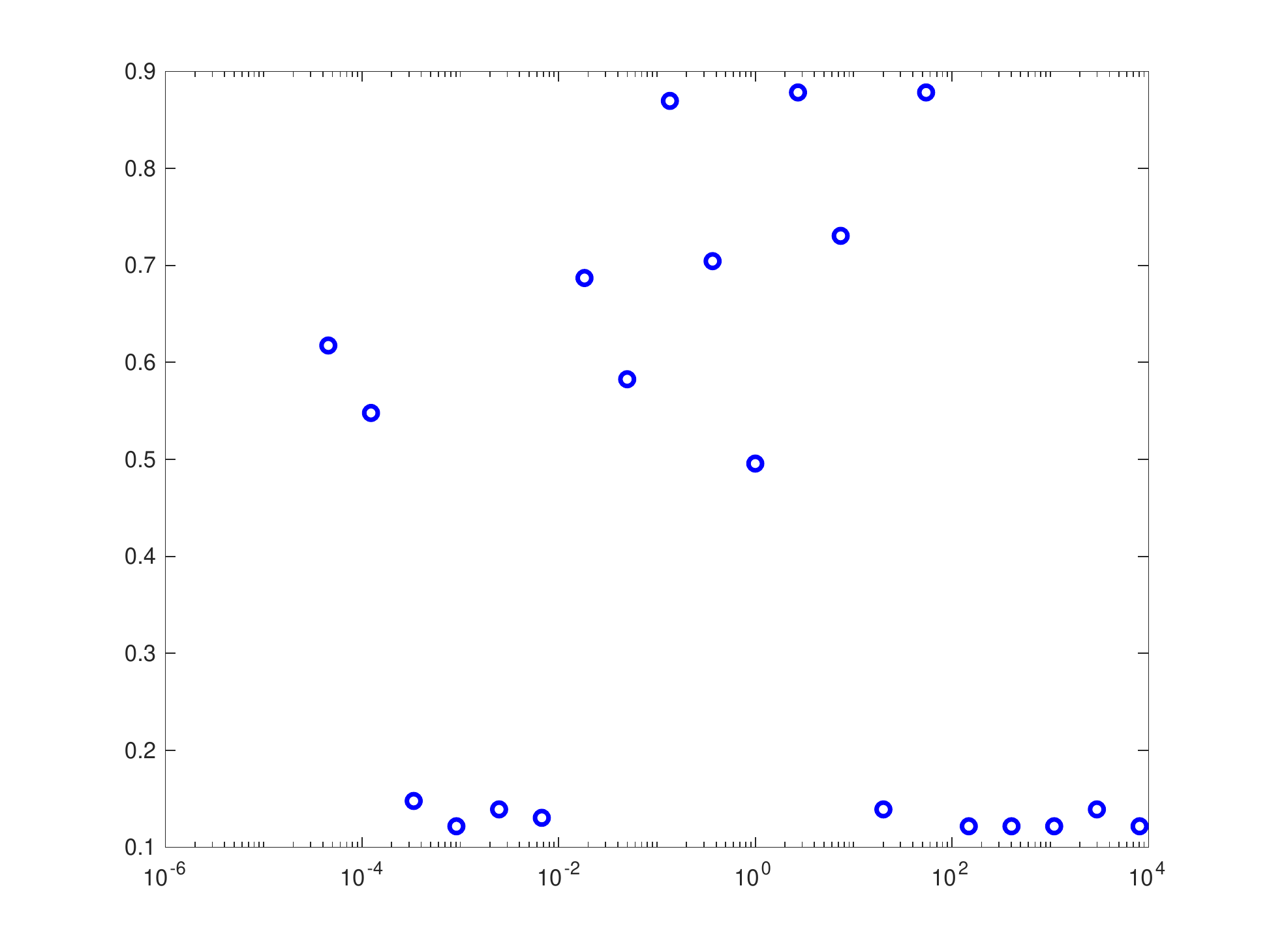}
\caption{Football}
\end{subfigure}
\begin{subfigure}{.25\textwidth}
\includegraphics[width=\textwidth]{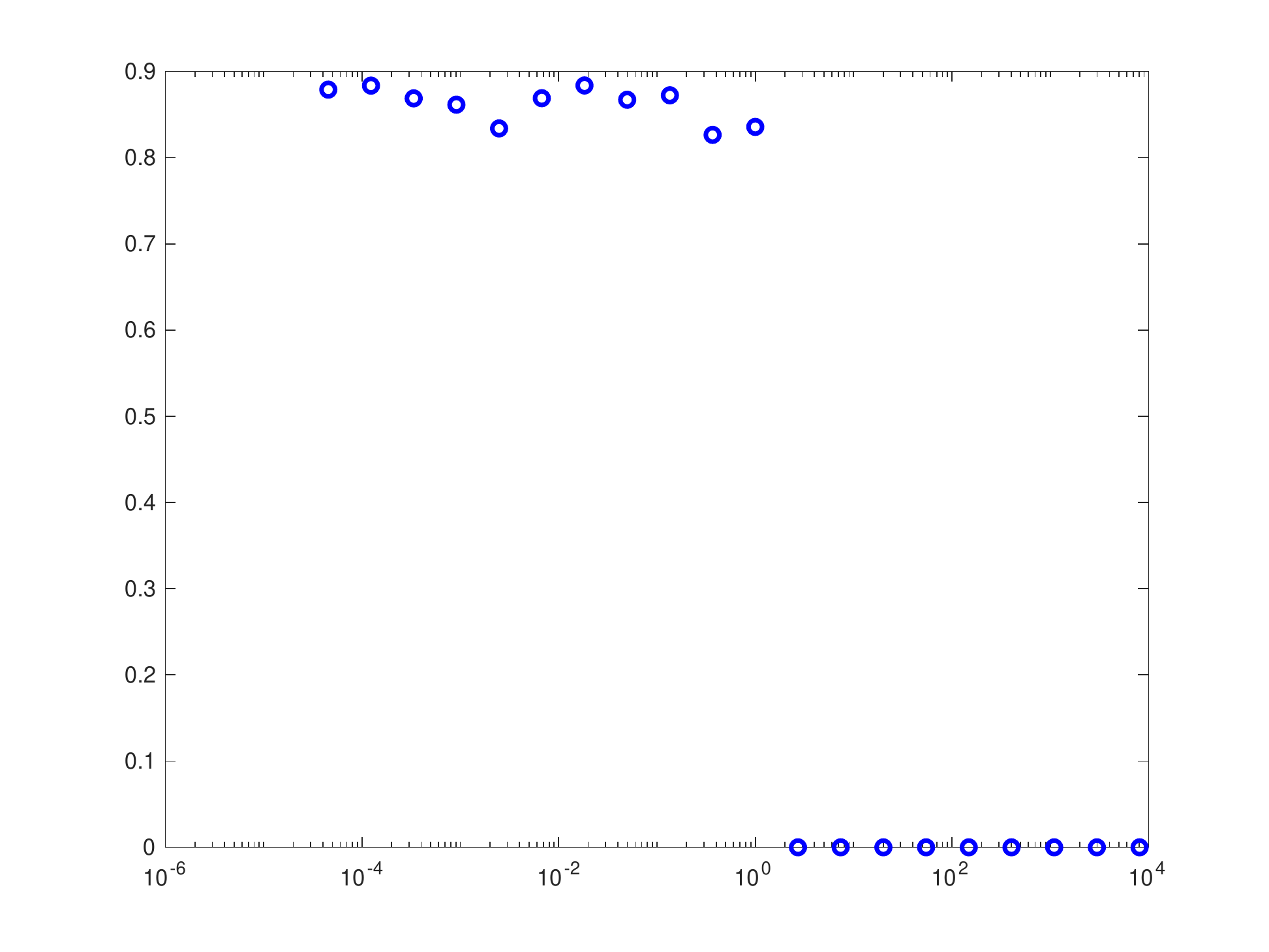}
\caption{Optdigits}
\end{subfigure}
\begin{subfigure}{.25\textwidth}
\includegraphics[width=\textwidth]{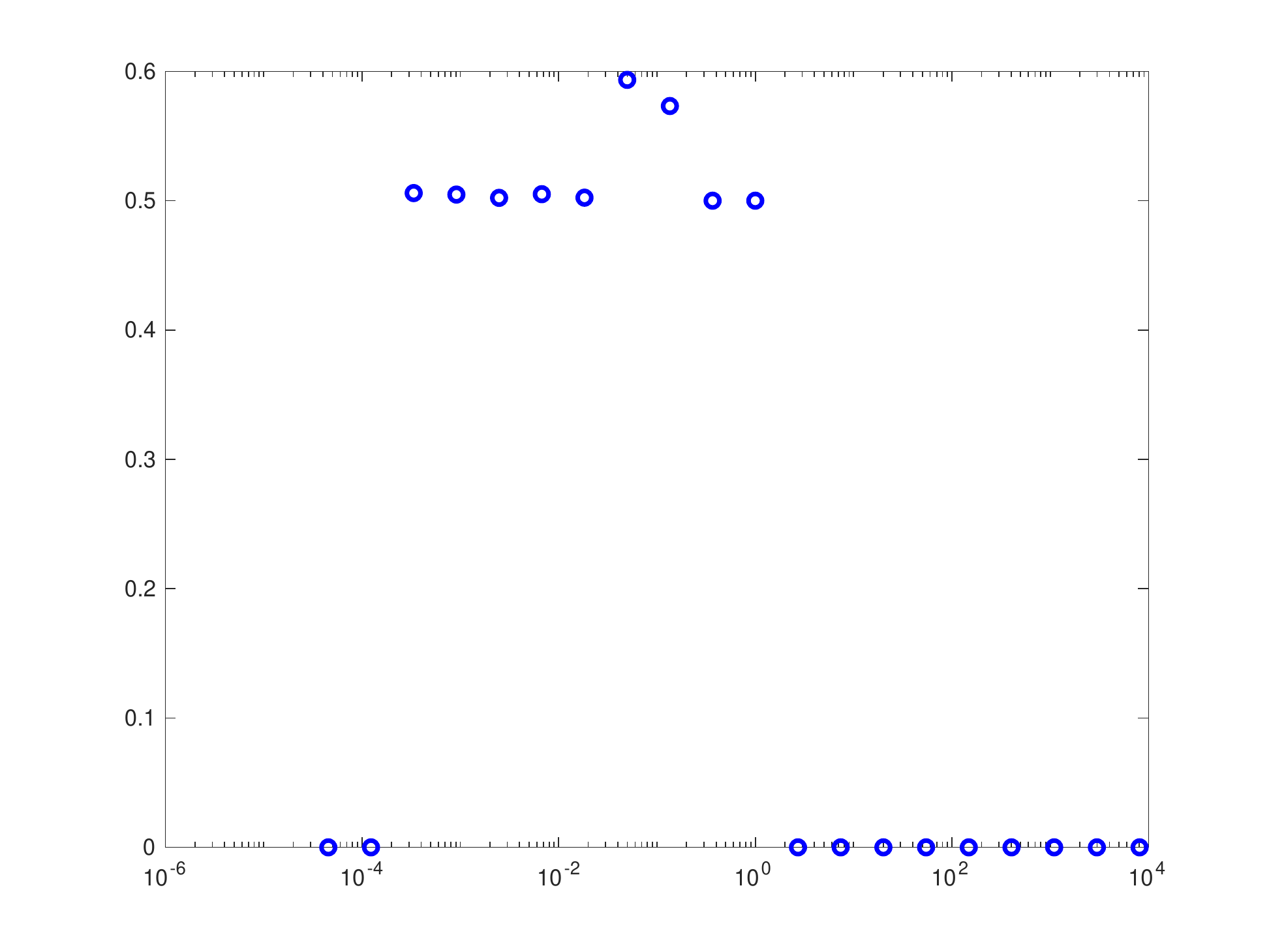}
\caption{Seismic}
\end{subfigure}
\caption{Purity measures for three selected data sets as a function of the scale parameter $\nu$. In all three panels, we observe a stable range (on a log scale) where purity is stably nontrivial, and in the left panel, there are two such scales.} 
\label{fig:epsweep}
\end{figure}

\begin{figure}[t!]
\centering
\begin{subfigure}{.25\textwidth}
\includegraphics[width=\textwidth]{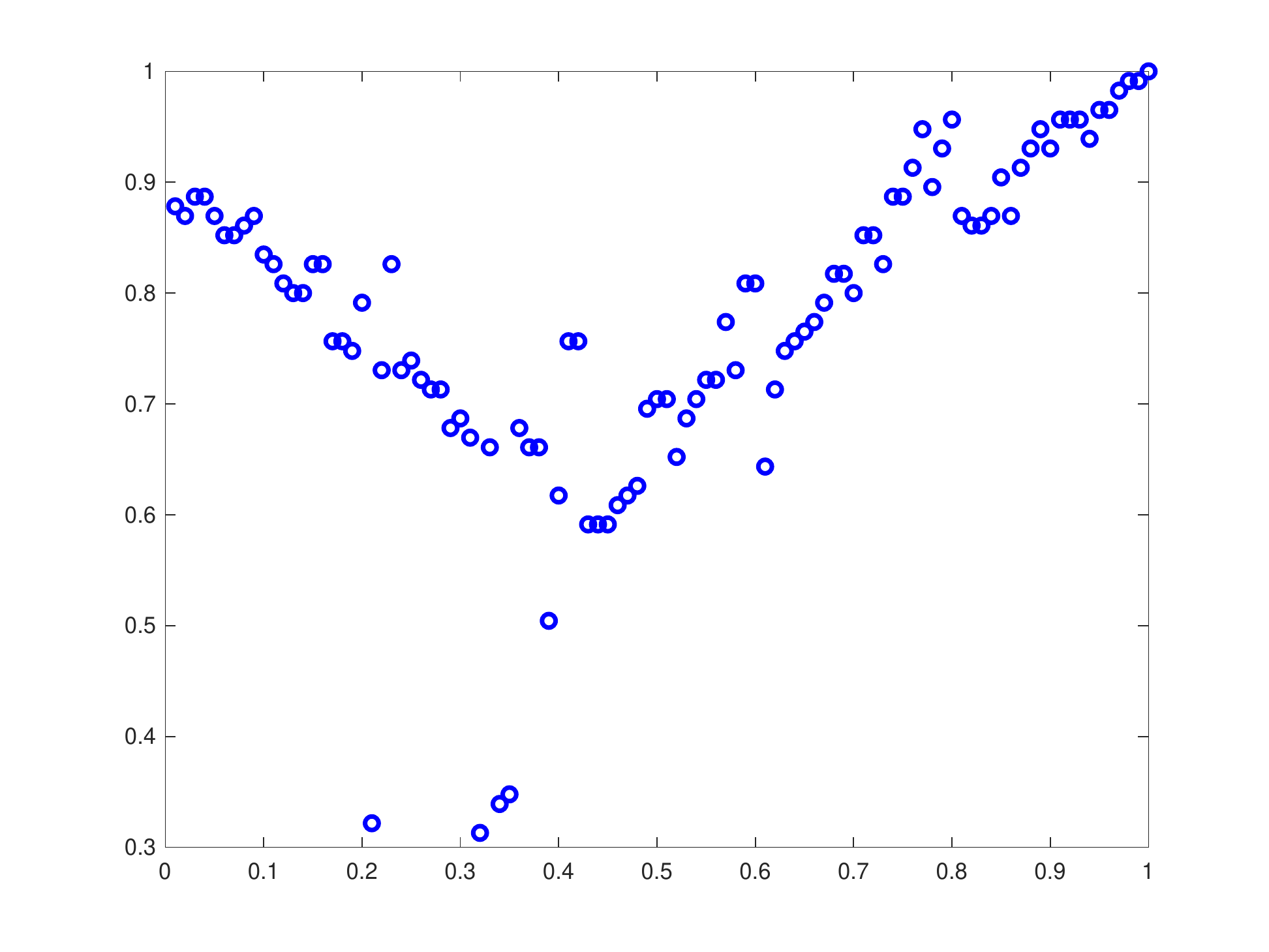}
\caption{Football}
\end{subfigure}
\begin{subfigure}{.25\textwidth}
\includegraphics[width=\textwidth]{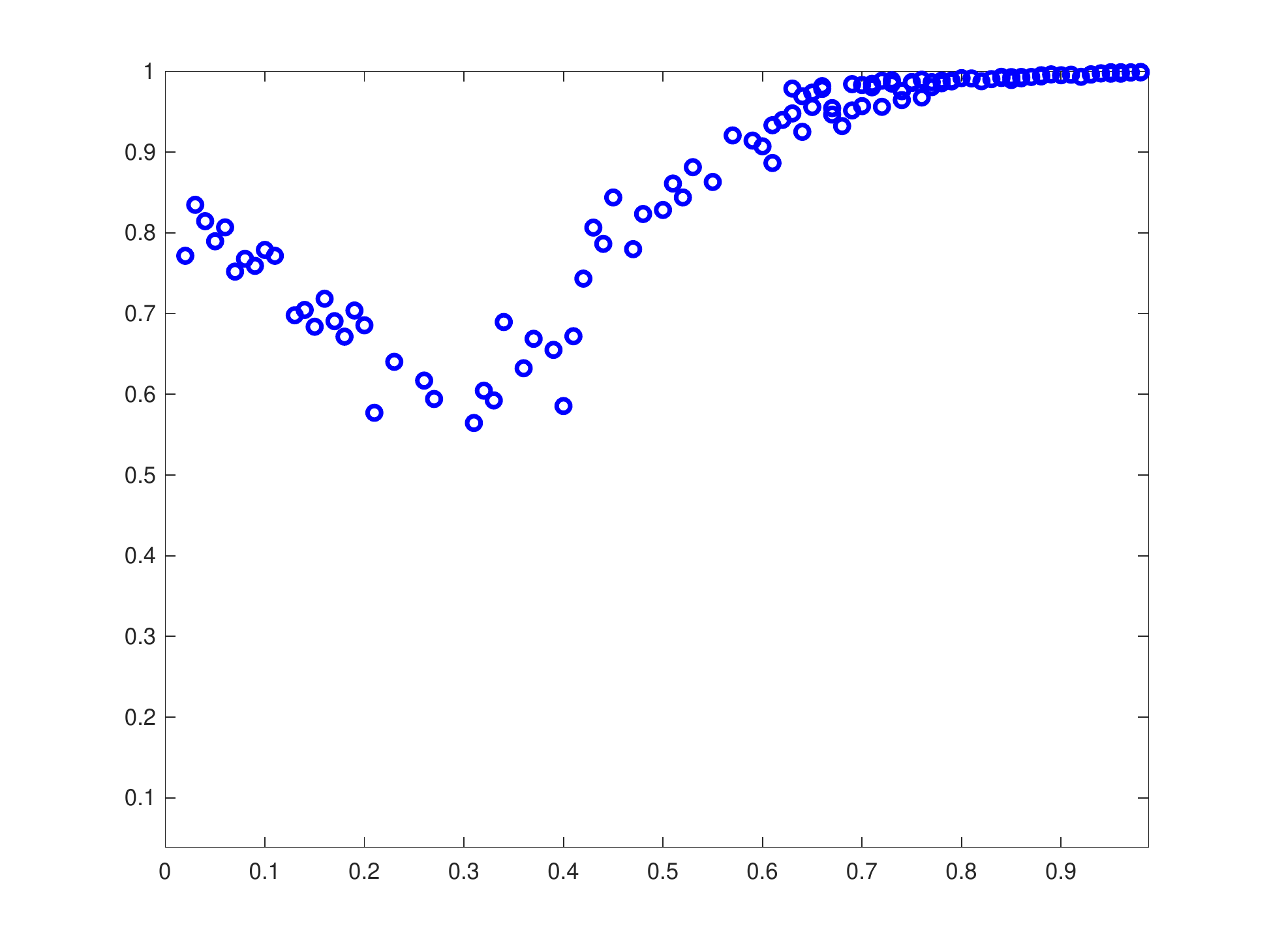}
\caption{Optdigits}
\end{subfigure}
\begin{subfigure}{.25\textwidth}
\includegraphics[width=\textwidth]{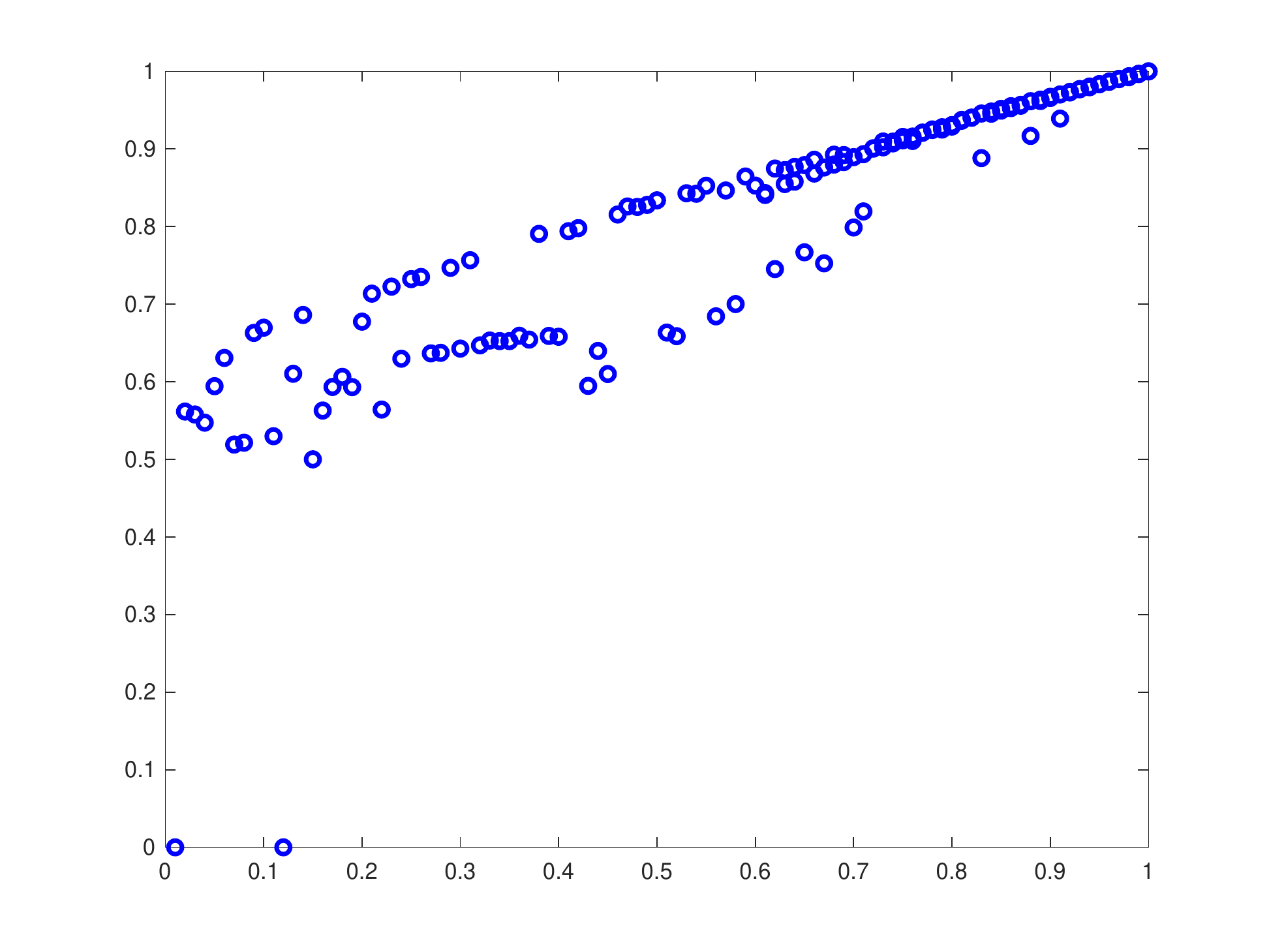}
\caption{Seismic}
\end{subfigure}
\caption{Purity measures on three selected data sets as a function of the fraction of supervision (nodes with labels provided) under semi-supervised learning. We observe that supervision can either consistently help (as in the right panel) or can have inconsistent effects (as in the left and middle panels). Once possible explanation for this is that there may be multiple clustering structures present in the data, and it takes a lot of supervision to force the partitioner to switch to a partition aligned with the metadata indicated by the supervision, rather than a different clustering structure that is better from the perspective of the optimizer.}
\label{fig:persweep}
\end{figure}

\section{Discussion}
\label{s:disc}

Throughout our study we emphasize that our methodology operates fundamentally on the possibly directed nature of the underlying graph data. Considering the Index of Complex Networks \cite{ICON} as a representative collection of widely-studied networks, we note that (as of our writing here) 327 of the 698 entries in the Index contain directed data. Whereas there are undoubtedly settings where one can ignore edge direction, there are inevitably others where respecting direction is essential. By formulating a strategy for subgraph detection and graph partitioning inherently built on processes running on the directed graph, we avoid the need for any \emph{post hoc} modifications to try to respect directed edges. In particular, our method nowhere relies on any correspondingly undirected version of the graph, avoiding possible information lost in symmetrizing. 

While we expect that our formulation of escape times can be useful in general, including for undirected graphs, our proper treatment of the directed graph data should prove especially useful. For example, the directed follower v.\ following nature of some online social networks (e.g., Twitter) is undoubtedly important for understanding the processes involved in the viral spread of (mis)information. As shown by \cite{Weng_Menczer_Ahn_2013} (and extended by \cite{li2019infectivity}), the community structure is particularly important for identifying the virality of memes specifically because a meme that ``escapes" (in our present language) its subgraph of origin is typically more likely to continue to propagate. Another application where directed escape times could be relevant is in detecting the (hidden) circulation of information, currency, and resources that is part of coordinated adversarial activity, as explored for example in~\cite{jin2019noisy,moorman2018filtering,sussman2020matched}.% In this sense we would expect that our subgraph detector and graph partitioner approach should be useful here and in similar settings.   

%\begin{itemize}
%    \item What other things could you do on directed networks?
%    \item Discuss what information lost in symmetrizing (our method never uses any undirected version thereof)
%    \item Point to array of directed data
%    \item Applications from that data or other ideas about important directions this opens
%\end{itemize}

To close, we highlight two related thematic areas for possible future work that we believe would lead to important extensions on the methods presented here. %as an outlook for future research and applications.

\subsection{Connection to distances on directed graphs}
In previous work of the present authors \cite{boyd2020metric} along with Jonathan Weare, we construct a symmetrized distance function on the vertices of a directed graph.  We recall the details briefly here, which is based somewhat upon the hitting probability matrix construction used in umbrella sampling (\cite{dinner2017stratification,Thiede_2015}).  For a general probability transition matrix $P$, we denote the Perron eigenvector as
\[
P' \phi = \phi.
\]
Let us define a matrix $M$ such that $M_{ij} = \prob_i [\tau_j < \tau_i]$, where $\prob_i [\tau_j < \tau_i]$ is the probability that starting from site $i$ the hitting time of $j$ is less than the time it takes to return to $i$. Let $X(t)$ be the Markov chain with transition matrix $P$. Then, it can be observed that (\cite{dinner2017stratification,Thiede_2015})
\[
\prob_i [\tau_j < \tau_i] \phi_i = \prob_j [\tau_i < \tau_j] \phi_j,
\]
where $\prob_i$ represents the probability from $X(0) = i$. 
This means that from hitting times, one can construct a symmetric adjacency matrix,
\begin{equation}
\label{Aht}
A^{(hp)}_{ij} = \frac{ \sqrt{\phi_i} }{ \sqrt{\phi_j} } \prob_i [\tau_j < \tau_i] = A^{(hp)}_{ji}\,.
\end{equation}
This adjacency matrix has built-in edge weights based upon hitting times, and we can then easily partition this adjacency matrix using our symmetric algorithms, in particular the mean exit time fast algorithm developed here.  
The distance function in \cite{boyd2020metric} is given by $d^\beta \colon [n] \times [n] \to \mathbb R$, which we refer to as the \emph{hitting probability pseudo-metric}, by 
\begin{equation} \label{e:Dist}
  d (i,j) = - \log \left( A^{(hp)}_{ij}  \right). 
\end{equation}
This is generically a pseudo-metric as it is possible distinct nodes can be distance $0$ from one another, however there exists a quotient graph on which $d$ is a genuine metric.
Indeed, a family of metrics is given in \cite{boyd2020metric} that has to do with possible choices of the normalization in \cref{Aht} with different powers of the invariant measure.  
A natural question to pursue is whether parsing the directed network with this approach to create the symmetrized $A^{(hp)}$ matrix, then applying our clustering scheme can be used to effectively detect graph structures in a more robust manner.  In particular, comparison of our clustering scheme versus $K$-means studies of the distance structure should be an important direction for future study.

\subsection{Continuum Limits}
The methods presented here have a clear analog in the continuum setting to the motivated problems in the continuum discussed in the introduction.  The primary continuum problem is related to the landscape function, or torsion function, on a sub-domain prescribed with Dirichlet boundary conditions,
\begin{align}
-\Delta u_S = 1_S, \ \ u_S |_{\partial S} = 0.  
\end{align}
This is known as the mean exit time from a set $S$ of a standard Brownian motion random walker, see \cite{pavliotis2014stochastic}, Chapter $7$.  Correspondingly, for a domain $\Omega$ with Neumann boundary conditions (to make life easier with graphs) and some $0 < \alpha < 1$, we propose the following optimization
\begin{align}
\max_{S \subset \Omega, |S| = \alpha |\Omega|} \int_S u_S\, dx\,,
\end{align}
meaning that we wish to maximize the exit time of a random walker from a given sub-domain.  Through the Poisson formula for the mean exit time, we have that $\int u_S = (- \Delta u_S, u_S)$, allowing us to frame things similarly via a Ginzburg--Landau like penalty term for being in a set $S$,
$$
\min_{ \substack{0\leq \phi \leq 1 \\ \int \phi = \alpha |\Omega| }} \  \min_{ \int u = 1 } \frac12 (- \Delta u_S, u_S) + \frac{1}{2 \epsilon} \langle u, (1-\phi) u \rangle.
$$
Analysis of optimizers for such a continuum problem and its use in finding sub-domains and domain partitions is one important direction for future study. 
%(and is currently being looked at in collaboration with Matthias Kurzke). 
Related results in a continuum setting have been studied for instance in \cite{briancon2004regularity,buttazzo1993existence}, but the regularization of this problem seems to be new and connects the problem through the inverse of the Laplacian to the full domain and its boundary conditions.  Following works such as \cite{osting2017consistency,singer2017spectral,trillos2016continuum,trillos2018variational,trillos2016consistency,YUAN_2021}, an interesting future direction would be to prove consistency of our algorithm to these well-posed continuum optimization problems.  

%\PJM{Page limit: ``20 pages (whichever is more), including appendices but not including references or supplemental materials'' (https://epubs.siam.org/journal/simods/instructions-for-authors) We can always move the big table to the supplement if need be.}

\bibliographystyle{amsplain}
\bibliography{refs}

\providecommand{\bysame}{\leavevmode\hbox to3em{\hrulefill}\thinspace}
\providecommand{\MR}{\relax\ifhmode\unskip\space\fi MR }
% \MRhref is called by the amsart/book/proc definition of \MR.
\providecommand{\MRhref}[2]{%
  \href{http://www.ams.org/mathscinet-getitem?mr=#1}{#2}
}
\providecommand{\href}[2]{#2}
\begin{thebibliography}{10}

\bibitem{andersen2009}
Reid Andersen and Kumar Chellapilla, \emph{Finding dense subgraphs with size
  bounds}, Int. workshop on algorithms and models for the web-graph, Springer,
  2009, pp.~25--37.

\bibitem{andersen2007algorithms}
Reid Andersen, Fan Chung, and Kevin Lang, \emph{Local partitioning for directed
  graphs using {P}age{R}ank}, Algorithms and Models for the Web-Graph (Berlin)
  (Anthony Bonato and Fan R.~K. Chung, eds.), Springer, 2007, pp.~166--178.

\bibitem{arora2010logn}
Sanjeev Arora, Elad Hazan, and Satyen Kale, \emph{O($\log n$) approximation to
  {SPARSEST CUT} in o($n^2$) time}, SIAM J. Comp. \textbf{39} (2010), no.~5,
  1748--1771.

\bibitem{arora2004expander}
Sanjeev Arora, Satish Rao, and Umesh Vazirani, \emph{Expander flows and a
  $\sqrt{\log n}$-approximation to sparsest cut}, Proc. 36th ACM Symposium on
  Theory of Comp., vol.~25, 2004, p.~29.

\bibitem{arora2008geometry}
\bysame, \emph{Geometry, flows, and graph-partitioning algorithms}, Comm. ACM
  \textbf{51} (2008), no.~10, 96--105.

\bibitem{arora2009expander}
\bysame, \emph{Expander flows, geometric embeddings and graph partitioning}, J.
  ACM \textbf{56} (2009), no.~2, 1--37.

\bibitem{Bassett_2011}
Danielle~S. Bassett, Nicholas~F. Wymbs, Mason~A. Porter, Peter~J. Mucha,
  Jean~M. Carlson, and Scott~T. Grafton, \emph{Dynamic reconfiguration of human
  brain networks during learning}, Proc. Nat. Acad. Sci. U.S.A. \textbf{108}
  (2011), no.~18, 7641--7646.

\bibitem{bhaskara2010}
Aditya Bhaskara, Moses Charikar, Eden Chlamtac, Uriel Feige, and Aravindan
  Vijayaraghavan, \emph{Detecting high log-densities: an o ($n^{1/4}$)
  approximation for densest k-subgraph}, Proc. 42nd ACM symposium on theory of
  computing, 2010, pp.~201--210.

\bibitem{BDR13}
Erik~G. Boman, Karen~D. Devine, and Sivasankaran Rajamanickam, \emph{Scalable
  matrix computations on large scale-free graphs using {2D} graph
  partitioning}, Proc. Int. Conf. High Performance Comput., Networking, Storage
  Anal., 2013, pp.~1--12.

\bibitem{boyd2020metric}
Zachary~M. Boyd, Nicolas Fraiman, Jeremy Marzuola, Peter~J. Mucha, Braxton
  Osting, and Jonathan Weare, \emph{A metric on directed graphs and markov
  chains based on hitting probabilities}, SIAM J. Math. Data Sci. \textbf{3}
  (2021), no.~2, 467--493.

\bibitem{bresson2013adaptive}
Xavier Bresson, Thomas Laurent, David Uminsky, and James~H von Brecht, \emph{An
  adaptive total variation algorithm for computing the balanced cut of a
  graph}, arXiv:1302.2717, 2013.

\bibitem{briancon2004regularity}
Tanguy Briancon, \emph{Regularity of optimal shapes for the {D}irichlet's
  energy with volume constraint}, ESAIM: Control, Optimisation and Calculus of
  Variations \textbf{10} (2004), no.~1, 99--122.

\bibitem{budd2020graph}
Jeremy Budd and Yves~Van Gennip, \emph{Graph {Merriman--Bence--Osher} as a
  semidiscrete implicit {E}uler scheme for graph {Allen--Cahn} flow}, SIAM J.
  Math. Anal. \textbf{52} (2020), no.~5, 4101--4139.

\bibitem{BM13}
Ayd{\i}n Bulu{\c{c}} and Kamesh Madduri, \emph{Graph partitioning for scalable
  distributed graph computations}, Contemporary Math. \textbf{588} (2013),
  83--102.

\bibitem{buttazzo1993existence}
Giuseppe Buttazzo and Gianni Dal~Maso, \emph{An existence result for a class of
  shape optimization problems}, Archive Rational Mech. Anal. \textbf{122}
  (1993), no.~2, 183--195.

\bibitem{chambolle2006convergence}
Antonin Chambolle and Matteo Novaga, \emph{Convergence of an algorithm for the
  anisotropic and crystalline mean curvature flow}, SIAM J. Math. Anal.
  \textbf{37} (2006), no.~6, 1978--1987.

\bibitem{CC11}
Shumo Chu and James Cheng, \emph{Triangle listing in massive networks and its
  applications}, Proc. 17th ACM Int. Conf. Knowledge Discovery and Data Mining
  (New York), ACM, 2011, pp.~672--680.

\bibitem{ICON}
Aaron Clauset, Ellen Tucker, and Matthias Sainz, \emph{The {Colorado} {Index}
  of {Complex} {Networks}}, 2016, https://icon.colorado.edu/.

\bibitem{dinner2017stratification}
Aaron~R. Dinner, Erik~H. Thiede, Brian~Van Koten, and Jonathan Weare,
  \emph{Stratification as a general variance reduction method for {M}arkov
  chain {M}onte {C}arlo}, SIAM/ASA J. Uncertainty Quantification \textbf{8}
  (2020), no.~3, 1139--1188.

\bibitem{feige2001dense}
Uriel Feige, David Peleg, and Guy Kortsarz, \emph{The dense k-subgraph
  problem}, Algorithmica \textbf{29} (2001), no.~3, 410--421.

\bibitem{fiedler1973algebraic}
Miroslav Fiedler, \emph{Algebraic connectivity of graphs}, Czechoslovak Math.
  J. \textbf{23} (1973), no.~2, 298--305.

\bibitem{fortunato_community_2010}
Santo Fortunato, \emph{Community detection in graphs}, Phys. Rep. \textbf{486}
  (2010), no.~3, 75--174.

\bibitem{fortunato_community_2016}
Santo Fortunato and Darko Hric, \emph{Community detection in networks: {A} user
  guide}, Phys. Rep. \textbf{659} (2016), 1--44.

\bibitem{fortunato_community_2022}
Santo Fortunato and Mark E.~J. Newman, \emph{20 years of network community
  detection}, Nat. Phys. \textbf{18} (2022), no.~88, 848--850 (en).

\bibitem{fratkin2006motifcut}
Eugene Fratkin, Brian~T Naughton, Douglas~L Brutlag, and Serafim Batzoglou,
  \emph{Motif{C}ut: regulatory motifs finding with maximum density subgraphs},
  Bioinformatics \textbf{22} (2006), no.~14, e150--e157.

\bibitem{ishii2005optimal}
Katsuyuki Ishii, \emph{Optimal rate of convergence of the
  {Bence--Merriman--Osher} algorithm for motion by mean curvature}, SIAM J.
  Math. Anal. \textbf{37} (2005), no.~3, 841--866.

\bibitem{jacobs2018auction}
Matt Jacobs, Ekaterina Merkurjev, and Selim Esedoḡlu, \emph{Auction dynamics:
  A volume constrained {MBO} scheme}, J. Comput. Phys. \textbf{354} (2018),
  288--310.

\bibitem{jin2019noisy}
Hui Jin, Xie He, Yanghui Wang, Hao Li, and Andrea~L. Bertozzi, \emph{Noisy
  subgraph isomorphisms on multiplex networks}, 2019 IEEE Int. Conf. on Big
  Data, 2019, pp.~4899--4905.

\bibitem{khandekar2009graph}
Rohit Khandekar, Satish Rao, and Umesh Vazirani, \emph{Graph partitioning using
  single commodity flows}, J. ACM \textbf{56} (2009), no.~4, 1--15.

\bibitem{Lambiotte_Delvenne_Barahona_2014}
Renaud Lambiotte, Jean-Charles Delvenne, and Mauricio Barahona, \emph{Random
  walks, {M}arkov processes and the multiscale modular organization of complex
  networks}, IEEE Trans. Net. Sci. Eng. \textbf{1} (2014), no.~2, 76--–90.

\bibitem{Leicht_Newman_2008}
E.~A. Leicht and M.~E.~J. Newman, \emph{Community structure in directed
  networks}, Phys. Rev. Letters \textbf{100} (2008), no.~11, 118703--4.

\bibitem{leighton1989approximate}
Tom Leighton and Satish Rao, \emph{An approximate max-flow min-cut theorem for
  uniform multicommodity flow problems with applications to approximation
  algorithms}, Tech. report, MIT Cambridge Microsystems Research Center, 1989.

\bibitem{li2019infectivity}
Weihua Li, Skyler~J Cranmer, Zhiming Zheng, and Peter~J Mucha,
  \emph{Infectivity enhances prediction of viral cascades in twitter}, PLOS One
  \textbf{14} (2019), no.~4, e0214453.

\bibitem{ma2020efficient}
Chenhao Ma, Yixiang Fang, Reynold Cheng, Laks V.~S. Lakshmanan, Wenjie Zhang,
  and Xuemin Lin, \emph{Efficient algorithms for densest subgraph discovery on
  large directed graphs}, Proc. 2020 ACM SIGMOD Int. Conf. Mgmnt. Data (2020),
  1051--1066.

\bibitem{manurangsi2017}
Pasin Manurangsi, \emph{Almost-polynomial ratio eth-hardness of approximating
  densest k-subgraph}, Proc. 49th Annual ACM SIGACT Symposium on Theory of
  Computing, 2017, pp.~954--961.

\bibitem{MBO1993}
B.~Merriman, J.~K. Bence, and S.~Osher, \emph{Diffusion generated motion by
  mean curvature}, AMS Selected Letters, Crystal Grower's Workshop (1993),
  73--83.

\bibitem{Meunier_2009}
D.~Meunier, R.~Lambiotte, A.~Fornito, K.~D. Ersche, and E.~T. Bullmore,
  \emph{Hierarchical modularity in human brain functional networks}, Frontiers
  in Neuroinformatics \textbf{3} (2009), 37.

\bibitem{Moody_2001}
James Moody, \emph{Peer influence groups: {I}dentifying dense clusters in large
  networks}, Social Networks \textbf{23} (2001), no.~4, 261--283.

\bibitem{moorman2018filtering}
Jacob~D. Moorman, Qinyi Chen, Thomas~K. Tu, Zachary~M. Boyd, and Andrea~L.
  Bertozzi, \emph{Filtering methods for subgraph matching on multiplex
  networks}, 2018 IEEE Int. Conf. Big Data, 2018, pp.~3980--3985.

\bibitem{Mouw_2012}
Ted Mouw and Ashton~M. Verdery, \emph{Network sampling with memory: {A}
  proposal for more efficient sampling from social networks}, Sociological
  Methodology \textbf{42} (2012), no.~1, 206--256.

\bibitem{Mucha_Richardson_Macon_Porter_Onnela_2010}
Peter~J. Mucha, Thomas Richardson, Kevin Macon, Mason~A. Porter, and
  Jukka-Pekka Onnela, \emph{Community structure in time-dependent, multiscale,
  and multiplex networks}, Science \textbf{328} (2010), no.~5980, 876--878.

\bibitem{newman_finding_2004}
M.~E.~J. Newman and M.~Girvan, \emph{Finding and evaluating community structure
  in networks}, Phys. Rev. E \textbf{69} (2004), no.~2, 026113.

\bibitem{osting2017consistency}
Braxton Osting and Todd~Harry Reeb, \emph{Consistency of {D}irichlet
  partitions}, SIAM J. Math. Anal. \textbf{49} (2017), no.~5, 4251--4274.

\bibitem{osting2014minimal}
Braxton Osting, Chris~D White, and {\'E}douard Oudet, \emph{Minimal {D}irichlet
  energy partitions for graphs}, SIAM J. Sci. Comp. \textbf{36} (2014), no.~4,
  A1635--A1651.

\bibitem{pavliotis2014stochastic}
Grigorios~A Pavliotis, \emph{Stochastic processes and applications: {D}iffusion
  processes, the {F}okker--{P}lanck and {L}angevin equations}, vol.~60,
  Springer, 2014.

\bibitem{porter_communities_2009}
M.~A. Porter, J.~P. Onnela, and P.~J. Mucha, \emph{Communities in networks},
  Notices of the AMS \textbf{56} (2009), no.~9, 1082–--1097 \& 1164–--1166.

\bibitem{SW13}
Semih Salihoglu and Jennifer Widom, \emph{{GPS}: A graph processing system},
  SSDBM: Proc. 25th Int. Conf. Scientific and Stat. Database Mgmnt. (New York),
  SSDBM, ACM, 2013.

\bibitem{shai_case_2017}
Saray Shai, Natalie Stanley, Clara Granell, Dane Taylor, and Peter~J. Mucha,
  \emph{Case studies in network community detection}, The Oxford Handbook of
  Social Networks (Ryan Light and James Moody, eds.), Oxford University Press,
  2021, pp.~309--333.

\bibitem{shi2000normalized}
Jianbo Shi and Jitendra Malik, \emph{Normalized cuts and image segmentation},
  IEEE Trans. Pattern Anal. Machine Intelligence \textbf{22} (2000), no.~8,
  888--905.

\bibitem{singer2017spectral}
Amit Singer and Hau-Tieng Wu, \emph{Spectral convergence of the connection
  {L}aplacian from random samples}, Information and Inference \textbf{6}
  (2017), no.~1, 58--123.

\bibitem{sussman2020matched}
Daniel~L. Sussman, Youngser Park, Carey~E. Priebe, and Vince Lyzinski,
  \emph{Matched filters for noisy induced subgraph detection}, IEEE Trans.
  Pattern Analysis and Machine Intelligence \textbf{42} (2020), no.~11,
  2887--2900.

\bibitem{Thiede_2015}
Erik Thiede, Brian~Van Koten, and Jonathan Weare, \emph{Sharp entrywise
  perturbation bounds for {M}arkov chains}, {SIAM} J. Matrix Anal. App.
  \textbf{36} (2015), no.~3, 917--941.

\bibitem{Traag_Aldecoa_Delvenne_2015}
V.~A. Traag, R.~Aldecoa, and J.-C. Delvenne, \emph{Detecting communities using
  asymptotical surprise}, Phys. Rev. E \textbf{92} (2015), no.~2, 022816.

\bibitem{trillos2016continuum}
Nicol{\'a}s~Garc{\'\i}a Trillos and Dejan Slep{\v{c}}ev, \emph{Continuum limit
  of total variation on point clouds}, Archive for Rational Mechanics and
  Analysis \textbf{220} (2016), no.~1, 193--241.

\bibitem{trillos2018variational}
\bysame, \emph{A variational approach to the consistency of spectral
  clustering}, Appl. Comp. Harmonic Anal. \textbf{45} (2018), no.~2, 239--281.

\bibitem{trillos2016consistency}
Nicol{\'a}s~Garc{\'\i}a Trillos, Dejan Slep{\v{c}}ev, James Von~Brecht, Thomas
  Laurent, and Xavier Bresson, \emph{Consistency of {C}heeger and ratio graph
  cuts}, J. Machine Learning Research \textbf{17} (2016), no.~1, 6268--6313.

\bibitem{van_Gennip_2014}
Yves van Gennip, Nestor Guillen, Braxton Osting, and Andrea~L. Bertozzi,
  \emph{Mean curvature, threshold dynamics, and phase field theory on finite
  graphs}, Milan J. Math. \textbf{82} (2014), no.~1, 3--65.

\bibitem{Verdery_2015}
Ashton~M. Verdery, M.~Giovanna Merli, James Moody, Jeffrey~A. Smith, and
  Jacob~C. Fisher, \emph{Brief report: Respondent-driven sampling estimators
  under real and theoretical recruitment conditions of female sex workers in
  {C}hina}, Epidemiology \textbf{26} (2015), no.~5, 661--665.

\bibitem{wang2019diffusion}
Dong Wang and Braxton Osting, \emph{A diffusion generated method for computing
  {D}irichlet partitions}, J. Comp. Appl. Math. \textbf{351} (2019), 302--316.

\bibitem{Weng_Menczer_Ahn_2013}
Lilian Weng, Filippo Menczer, and Yong-Yeol Ahn, \emph{Virality prediction and
  community structure in social networks}, Sci. Rep. \textbf{3} (2013), 2522.

\bibitem{Wilson_2014}
James~D. Wilson, Simi Wang, Peter~J. Mucha, Shankar Bhamidi, and Andrew~B.
  Nobel, \emph{A testing based extraction algorithm for identifying significant
  communities in networks}, Ann. Appl. Stat. \textbf{8} (2014), no.~3,
  1853--1891.

\bibitem{Yan_2021}
Hao Yan, Qianzhen Zhang, Deming Mao, Ziyue Lu, Deke Guo, and Sheng Chen,
  \emph{Anomaly detection of network streams via dense subgraph discovery},
  2021 Int. Conf. Comput. Communications and Networks, Jul 2021, pp.~1--9.

\bibitem{yang2012clustering}
Zhirong Yang, Tele Hao, Onur Dikmen, Xi~Chen, and Erkki Oja, \emph{Clustering
  by nonnegative matrix factorization using graph random walk}, Adv. Neural
  Inf. Proc. Sys. \textbf{25} (2012), 1079--1087.

\bibitem{YUAN_2021}
A.~Yuan, J.~Calder, and B.~Osting, \emph{A continuum limit for the {PageRank}
  algorithm}, European J. Appl. Math. \textbf{33} (2021), no.~3, 472--504.

\end{thebibliography}

\end{document}